\newtheorem{lemma}{Lemma}
\definecolor{darkgreen}{rgb}{0,0.5,0}
\definecolor{darkblue}{rgb}{0,0,0.6}
\definecolor{purple}{rgb}{0.4,.2,0.7}
\newcommand{\p}{\partial}
\newcommand{\be}{\begin{equation}}
\newcommand{\ee}{\end{equation}}
\newcommand*{\defeq}{\mathrel{\rlap{%
                     \raisebox{0.3ex}{$\m@th\cdot$}}%
                     \raisebox{-0.3ex}{$\m@th\cdot$}}%
                     =} 
\def\be{\begin{eqnarray}}
\def\ee{\end{eqnarray}}
\newcommand{\bea}{\begin{eqnarray}}
\newcommand{\eea}{\end{eqnarray}}
\def\ben{\begin{equation}}
\def\een{\end{equation}}
    \let\p=\phi \let\r=v
\def\be{\begin{equation}}
\def\ee{\end{equation}}
\def\ba{\begin{array}}
\def\ea{\end{array}}
\def\ba#1\ea{\begin{align}#1\end{align}}
\def\bs#1\es{\begin{split}#1\end{split}}
\renewcommand{\p}{\partial}
\newcommand{\bz}{\bar{z}}
\newcommand{\bq}{\bar{q}}
\newcommand{\btau}{\bar{\tau}}
\numberwithin{equation}{section}
\def \be {\begin{equation}}
\def \ee {\end{equation}}
\def \JM#1 {{\color{blue}  JM: #1 }}
\def \AAl#1 {{\color{red}  AA: #1 }}
\begin{document}
\onehalfspacing

\begin{center}

~
\vskip5mm

{\LARGE  {
Averaging over moduli in deformed WZW models
\\
\ \\
}}

Junkai Dong, Thomas Hartman, and Yikun Jiang

\vskip5mm
{\it Department of Physics, Cornell University, Ithaca, New York, USA
} 

\vskip5mm

{\tt jd722@cornell.edu, hartman@cornell.edu, yj366@cornell.edu }

\end{center}

\vspace{4mm}

\begin{abstract}
\noindent
WZW models live on a moduli space parameterized by current-current deformations. The moduli space defines an ensemble of conformal field theories, which generically have $N$ abelian conserved currents and central charge $c>N$. We calculate the average partition function and show that it can be interpreted as a sum over 3-manifolds. This suggests that the ensemble-averaged theory has a holographic dual, generalizing recent results on Narain CFTs. The bulk theory, at the perturbative level, is identified as $U(1)^{2N}$ Chern-Simons theory coupled to additional matter fields. From a mathematical perspective, our principal result is a Siegel-Weil formula for the characters of an affine Lie algebra.

 \end{abstract}

\pagebreak
\pagestyle{plain}

\setcounter{tocdepth}{2}
{}
\vfill

\ \vspace{-2cm}
\renewcommand{\baselinestretch}{1}\small
\tableofcontents
\renewcommand{\baselinestretch}{1.15}\normalsize

\newpage
\section{Introduction}

There are indications that the sum over topologies in the gravitational path integral is related to an ensemble average over microscopic theories \cite{Coleman:1988cy, Giddings:1988cx, 1806.06840, 1903.11115, 1910.10311, Marolf:2020xie, 2006.08648}. The clearest example is the duality between two-dimensional JT gravity and random matrix theory, which suggests a considerable extension to the AdS/CFT dictionary \cite{1903.11115}. 

There are different points of view on the recent developments resurrecting this  old idea. The first is that the implications of the gravitational path integral should be trusted, and non-factorizing contributions force us to consider an ensemble average (or, equivalently, superselection sectors). The second is that the sum over topologies is an uncontrolled artifact of the low-energy description, and the true microscopic theory is ordinary quantum mechanics without disorder. There is also a middle path, natural to an effective field theorist: The microscopic theory is ordinary quantum mechanics, but the sum over topologies accurately computes certain self-averaging observables. With this interpretation, we should \textit{either} include the sum over topologies \textit{or} UV microstates, not both, as the higher topologies encode some coarse-grained properties of the microstates.
There is direct evidence for this ``topological dualism" in string theory \cite{2008.07533, 2102.12355, 2004.06738, hep-th/0401024,0705.2768} and the SYK model \cite{2103.16754, 2105.08207}, and circumstantial evidence more generally, including the successful application of wormholes to the Page curve \cite{1911.12333, 1911.11977}.  

In more than two bulk dimensions, it is far from clear how to define an ensemble average of microscopic theories. Progress has been made in AdS$_3$ gravity in certain limits, e.g.~\cite{1912.07654, 2006.11317, 2006.08648, 2007.15653}. An alternative is to simplify the problem by adding more symmetries. Afkhami-Jeddi, Cohn, Hartman, and Tajdini \cite{2006.04839} and Maloney and Witten \cite{2006.04855} found that one simple ensemble of 2d CFTs --- the Narain ensemble of toroidal compactifications of $N$ free bosons --- has a holographic interpretation. After averaging over the Narain moduli, the partition function of the theory on any Riemann surface, or collection of disconnected Riemann surfaces, can be reinterpreted as a sum over three-dimensional topologies. The bulk theory is not ordinary gravity, because it has a large number of light states, but the same is true of other models where we can perform the sum over topologies, including tensionless string theory \cite{1803.04423,1911.00378} and the SYK model \cite{cond-mat/9212030, Kitaev1}. 

At present, there is no fully satisfactory, non-perturbative definition of the bulk theory dual to the Narain ensemble. It is related to $U(1)^{2N}$ Chern-Simons theory, but there is a restriction on the allowed gauge configurations and the sum over topologies appears \textit{ad hoc}. Therefore, conservatively, the bulk can be interpreted as an effective theory that is UV-completed by an individual CFT in the ensemble, and the question is what aspects of the UV are encoded in the IR sum over topologies. In theories of 2d gravity with a hard cutoff on the dilaton, such as those obtained by reducing from higher dimensions, the JT/Random matrix duality \cite{1903.11115} should be interpreted in the same spirit.

It would be useful to have more examples of averaged CFTs in order to explore these issues further. The original Narain duality has been extended in several directions recently, to include chemical potentials \cite{2102.12509}, orbifolds of free bosons \cite{2103.15826}, and more general quadratic forms \cite{2104.14710} (see also \cite{2006.08216,2009.01244,2012.15830,2102.03136,2104.10178} for related progress). In this paper, we generalize the Narain duality to an ensemble of CFTs defined by the moduli space of the $SU(N+1)_k$ WZW model deformed by exactly marginal current-current operators. The action is
\be\label{introS}
S_{\rm dWZW}  = S_{\rm WZW} + \sum_{i,j=1}^N \lambda_{ij} \int d^2 z J^i(z) \bar{J}^j(\bar{z}) \ ,
\ee
where $J^i$ are the Cartan currents. These deformed WZW models are exactly solvable, as described by F\"orste and Roggenkamp \cite{hep-th/0304234}. The moduli space is easy to describe. WZW models can be realized as the orbifolds 
\be\label{introCoset}
SU(N+1)_k = \left( \frac{SU(N+1)_k}{U(1)^N} \times T_\Lambda \right) / \, \Gamma_k \ ,
\ee
where the first factor is known as a generalized parafermion theory \cite{Gepner:1987sm, Gepner:1986hr},  $T_\Lambda$ is a Narain CFT, and $\Gamma_k$ is a discrete group. The current-current deformation acts only on the $T_\Lambda$ theory. Therefore the moduli space of deformed WZW models is locally identical to the Narain moduli space (but there are global differences related to dualities). 
The ensemble average is defined by integrating over the couplings $\lambda_{ij}$ with the Zamolodchikov measure.
We will see that the methods used to analyze the averaged Narain theory in \cite{2006.04839,2006.04855,2102.12509,2103.15826,2104.14710}, based on the Siegel-Weil theorem, can be extended to study this more general ensemble. The WZW model at level $k=1$ is a theory of $N$ free bosons, so we recover the Narain duality in this special case.

 Our main result is a simple formula for the averaged torus partition function, 
\be\label{introZ}
\langle Z_{\rm dWZW}(\tau) \rangle = \sum_{\gamma \in \Gamma_{\infty}\backslash PSL(2,\mathbb{Z})} \sum_{\lambda} |c_0^\lambda(\gamma \tau)|^2
\ee
where `dWZW' stands for `deformed WZW', $\lambda$ runs over primaries of $SU(N+1)_k$, and $c^\lambda_\mu$ is the string function of the affine Lie algebra. The sum over $PSL(2,\mathbb{Z})$ in this formula suggests a holographic interpretation as a sum over 3-manifolds. We will not have much to say about the bulk theory, but based on the partition function we conclude that perturbatively, it should be a theory of $U(1)^{2N}$ Chern-Simons fields coupled to topological matter dual to the parafermions.   The $U(1)$ sector and the parafermion sector are coupled only in the sense that they are correlated by the sum over manifolds. The non-perturbative definition of the bulk theory is incomplete, similar to the Narain duality.

It is remarkable that the final answer \eqref{introZ} is so simple. This is a consequence of various cancellations. Note that the parafermions do not directly participate in the averaging procedure, but nonetheless their characters, which are proportional to $c^\lambda_{\mu}$, provide the seed for the sum over manifolds.

This duality has some nice features beyond the original Narain duality. CFTs in the ensemble generically have $U(1)^N_{\rm left} \times U(1)^N_{\rm right}$ symmetry, like the Narain duality, but a larger central charge, $c>N$. There are non-conserved primary operators coming from the parafermion sector that survive the averaging procedure, so the bulk theory has matter fields. This allows for the study of averaged correlation functions, although we will not pursue this here. (Correlators were also studied recently in an orbifolded version of the Narain duality and found to have an interesting correspondence to rational tangles \cite{2103.15826}.) Also, since there is a coupling constant $1/k$, the theory has an 't Hooft-like limit $N,k \to \infty$. This limit resembles earlier versions of higher-spin AdS/CFT \cite{1011.2986, 1011.5900}. Finally the WZW models appear in various contexts in string theory, and we view our result as a step toward averaging more general sigma models. Any sigma model with enough symmetry can be recast as an abelian coset of a WZW model \cite{hep-th/9302033}, so at least these cases should be tractable.

An intriguing aspect of averaged holographic dualities is the interplay with global symmetries and the swampland \cite{2004.06738,2010.10539,2011.06005,2011.09444,2012.07875}. The arguments against global symmetries in quantum gravity \cite{Misner:1957mt,hep-th/0304042,1011.5120,1810.05337,2010.10539} do not apply to ensemble-averaged theories, because factorization fails, and black hole evaporation is not unitary in the usual sense. Thus global symmetries can arise after averaging.

The pattern in the Narain duality and deformed WZW duality is a bit different from the case of JT/Random matrix duality. At the perturbative level, the bulk theory has an apparent global symmetry $O(N,N)$ rotating the Chern-Simons fields. The moduli space that defines the CFT ensemble is a quotient of the same group, $O(N,N)$. Therefore by summing over topologies, we end up averaging over the apparent global symmetry group of the bulk theory. It is an interesting question whether this pattern applies more generally to theories of quantum gravity with apparent global symmetries. Perhaps whenever a bulk theory appears to have a global symmetry $G$, the sum over topologies induces an average over $G$. This is related to observations in \cite{2010.10539,2011.06005,2011.09444} that symmetries of the ensemble can appear as global symmetries in the bulk.

The outline of the paper is as follows. In section \ref{s:wzw} we review current-current deformations of WZW models, and set up the calculation of the average. Unlike in \cite{2006.04839,2006.04855}, we must consider twisted sectors of the Narain factor. Therefore in section \ref{s:theta} we calculate the average partition function of a toroidal CFT with twisted boundary conditions (this result stands independently and could be useful in other ensembles). In section \ref{s:averagewzw} we return to the WZW model and  derive the average partition function quoted above in \eqref{introZ}. We also consider the averaged spectrum. Section \ref{s:discussion} is a discussion of the holographic dual and some open questions.

\section{Current-current deformations of WZW models}\label{s:wzw}

\subsection{WZW model as orbifold}

Wess-Zumino-Witten (WZW) models are 2d rational CFTs associated to affine Lie groups \cite{Wess:1971yu, Witten:1983tw, Witten:1983ar , Novikov:1982ei, DiFrancesco:1997nk}. We will restrict to the case of $SU(N+1)_k$ with a diagonal spectrum, but it should be straightforward to extend the analysis to other compact (and possibly non-compact) groups. The central charge is
\be
c = \frac{k N(N+2) }{k+N+1} \ .
\ee
The level $k$ is a positive integer, and we assume $N>2$ because as we will see, otherwise the average is ill defined. Denote the root lattice of $SU(N+1)$ by $Q$ and the weight lattice by $P$. Let $e_i$ be the simple roots and $e^i$ the fundamental weights, with $e^i \cdot e_j = \delta^i_j$. Our convention is to suppress indices on lattice vectors but write other indices explicitly.

The torus partition function is a finite sum of affine characters,
\be \label{WZW}
Z_{\rm WZW}(\tau) = \sum_{\lambda \in P_k^+} |\chi_{\lambda}(\tau)|^2 \ ,
\ee
where $P_k^+ \subset P$ are the (finite parts of) integrable highest weights at level $k$, which correspond to Young tableaux with at most $k$ boxes in the first row.
Representations of the affine Lie algebra $\hat{\mathfrak{g}}_k$ decompose into representations of the coset $\hat{\mathfrak{g}}_k/\hat{\mathfrak{h}}$, where $\mathfrak{h}$ is the Cartan subalgebra of $\mathfrak{g}$. This underlies the orbifold construction in \eqref{introCoset}. Primaries in the WZW model decompose into the parafermions and toroidal bosons as
\be
\mathcal{V}_{\lambda} \cong \bigoplus_{\mu \in P/kQ}  \mathcal{V}^{PF}_{\lambda, \mu} \otimes \mathcal{V}^{\Lambda}_{\mu} \ .
\ee
Thus the characters can be expanded in theta functions,
\be\label{stringdecomp}
\chi_{\lambda}(\tau) = \sum_{\mu \in P/kQ} c^\lambda_\mu(\tau) \theta_{\mu}(\tau)
\ee
where the coefficients $c^\lambda_\mu$ are called string functions, and 
\be\label{smalltheta}
\theta_{\mu}(\tau) = \sum_{\delta \in Q} q^{\frac{1}{2k}(\mu+k\delta)^2}
\ee
with $q = e^{2\pi i \tau}$. The string functions are complicated objects, but they have relatively nice transformation laws under $SL(2,\mathbb{Z})$ \cite{Kac:1990gs}. Up to a factor of $\eta(\tau)^N$, the string functions are the branching functions for the $\hat{\mathfrak{g}}_k/\hat{\mathfrak{h}}$ coset.

An orbifold representation of WZW models as in \eqref{introCoset} was described by F\"orste and Roggenkamp \cite{hep-th/0304234}, following earlier work in \cite{hep-th/9302033, Yang:1988bi, Gepner:1987sm}. We will use a slightly different construction (see appendix \ref{app:orbifold} for a comparison). 
The orbifold group is
\be
\Gamma_k \cong Q/kQ \ .
\ee
To describe its action on the Hilbert space, let $G^0_{ij} = e_i \cdot e_j$ be the Cartan matrix and define the antisymmetric matrix $B^0_{ij} = G^0_{ij}$ for $i<j$, $B^0_{ij} = -G^0_{ij}$ for $i>j$. Denote 
\be \label{B0}
B^0 = e^i B^0_{ij} e^j \ .
\ee
Below we will use the shorthand $B^0 \beta := e^i B^0_{ij} ( e^j \cdot \beta)$. The symmetry $\gamma \in \Gamma_k$ acts on the (left $\otimes$ right) parafermion Hilbert space as
\begin{align}\label{gammaaction}
\mathcal{V}^{PF}_{\lambda, \mu} \otimes \overline{\mathcal{V}}^{PF}_{\lambda, \bar{\mu}}  &\to e^{i \pi \gamma \cdot(\mu + \bar{\mu}) /k- i \pi \gamma \cdot B^0 \cdot (\mu - \bar{\mu})/k } \mathcal{V}^{PF}_{\lambda, \mu} \otimes \overline{\mathcal{V}}^{PF}_{\lambda, \bar{\mu}}
\end{align}
and with the opposite phase on the bosons. 
 The partition functions of the two factors, twisted by group elements $\alpha, \beta \in \Gamma_k$, are
\begin{align}\label{twistedZPF}
Z^{PF}_{\alpha,\beta}(\tau) &= 
\frac{1}{N+1} |\eta(\tau)|^{2N} \sum_{\lambda \in P_k^+} \sum_{\mu \in P/kQ} e^{i\pi \alpha \cdot(2\mu-\beta-B^0 \beta)/k}c_\mu^\lambda(\tau) \bar{c}^\lambda_{\mu-\beta}(\btau) \\*
Z_{\alpha,\beta}^{\Lambda}(\tau) &= 
|\eta(\tau)|^{-2N} \sum_{\mu \in P/kQ} e^{-i \pi \alpha \cdot(2\mu-\beta-B^0\beta)/k} \theta_{\mu}(\tau) \bar{\theta}_{\mu-\beta}(\btau) \label{twistedZB}
\end{align}
for the parafermions and the free bosons, respectively. These combine to reproduce the WZW model partition function \eqref{WZW} by summing over twisted sectors and dividing by the order of $\Gamma_k$,
\begin{align}\label{zOrb}
Z_{\rm WZW}(\tau)  = k^{-N} \sum_{\alpha,\beta \in Q/kQ} Z_{\alpha,\beta}^{PF}(\tau) Z^\Lambda_{\alpha,\beta}(\tau) \ .
\end{align}
For more details and some useful string function identities, as well as a comparison to F\"orste and Roggenkamp, see appendix \ref{app:orbifold}.

The free boson sector is a toroidal CFT, so we can rewrite $Z_{\alpha,\beta}^{\Lambda}$ as a sum over a Narain lattice. By plugging \eqref{smalltheta} into \eqref{twistedZB} we see that the lattice is
\be \label{latticewzw}
\Lambda =  \frac{1}{\sqrt{k}} \{ (x, x') \in P \times P \, | \ x-x' \in k Q \} \ .
\ee
This is an even self-dual lattice in $\mathbb{R}^{N,N}$, with the inner product
\be
(\lambda_+, \lambda_-) \cdot (\mu_+, \mu_-) = \lambda_+ \cdot \mu_+ - \lambda_- \cdot \mu_- \ ,
\ee
where each entry is a vector in $\mathbb{R}^N$. Define the Siegel-Narain theta function
\begin{align}\label{thetaLambda}
\Theta_{\Lambda}(a,b,\tau) &= \sum_{\lambda \in \Lambda} q^{\frac{1}{2}(\lambda+b)_+^2} \bq^{\frac{1}{2}(\lambda+b)_-^2} e^{-2\pi i a\cdot(\lambda + \frac{1}{2}b )} \ .
\end{align}
The twisted boson partition function given in \eqref{twistedZB} is 
\begin{align}\label{zTheta}
Z_{\alpha,\beta}^{\Lambda}(\tau) &= |\eta(\tau)|^{-2N}  \Theta_{\Lambda}(w(\alpha), w(\beta), \tau) \ .
\end{align}
The twists $\alpha,\beta$ are vectors in $\mathbb{R}^N$ while $w(\alpha), w(\beta)$ are vectors in $\mathbb{R}^{N,N}$. In \eqref{zTheta} they are related by\footnote{Derivation: Starting from \eqref{thetaLambda}, write $\lambda = \frac{1}{\sqrt{k}}(\mu + k \delta, \mu + k \delta')  -\lambda_0$ where $\mu \in P/kQ$, $\delta, \delta' \in Q$, and the shift is by the lattice vector $\lambda_0 =  \frac{1}{\sqrt{k}}(p,p)$ with $p = \frac{1}{2}(1+B^0)\beta \in P$. Now plugging in the twists 
\eqref{twistRel} immediately gives \eqref{twistedZB}.}
\begin{align}\label{twistRel}
w(\alpha) &= \frac{1}{2\sqrt{k}}((1+B^0)\alpha, (-1+B^0)\alpha )  \ .
\end{align}
These twists are special in that they lie in the lattice $(\frac{1}{k}\Lambda)/\Lambda$ and satisfy
\be\label{absq}
w(\alpha)^2 = w(\beta)^2 = w(\alpha)\cdot w(\beta) = 0  
\ee
under the Narain inner product.
More details on the Narain lattice of the WZW model can be found in appendix \ref{app:narain}. 

\subsection{Current-current deformations}

The orbifold representation allows for a straightforward analysis of the moduli space of current-current deformations \cite{hep-th/0304234}. The conserved currents of the WZW model are $J^a(z)$ and $\bar{J}^a(\bz)$ with $a=1\dots \dim \mathfrak{g}$ running over the generators of the Lie algebra $\mathfrak{g}$. Currents in the Cartan subalgebra, which we denote $J^i(z)$ and $\bar{J}^i(\bz)$ with $i=1 \dots \mbox{rank\ }\mathfrak{g}$, are mutually commuting. Therefore for $SU(N+1)_k$ there are $N^2$ exactly marginal operators, $J^i(z) \bar{J}^j(\bz)$. The WZW model deformed by these operators as in \eqref{introS} is conformal for finite values of the couplings $\lambda_{ij}$. (Conformal invariance would be broken if we deformed by non-commuting currents.) The deformations modify the spectrum but leave the OPE coefficients unchanged \cite{hep-th/0304234, Chaudhuri:1988qb}.

In terms of the orbifold, the Cartan currents live in the Narain factor, so the deformation acts only on $T_{\Lambda}$. The deformed WZW models are therefore orbifolds \cite{hep-th/0304234}
\be
\mbox{dWZW} = \left( \frac{SU(N+1)_k}{U(1)^N} \times T_{O \Lambda} \right)/\Gamma_k \ ,
\ee
where $O \in O(N,N)$ deforms the Narain lattice $\Lambda$. 
The partition function is
\be
Z_{\rm dWZW}(\tau) = k^{-N} \sum_{\alpha,\beta \in Q/kQ} Z^{PF}_{\alpha,\beta}(\tau) Z^{O \Lambda}_{\alpha,\beta}(\tau) \ ,
\ee
where
\be\label{zOdef}
Z_{\alpha,\beta}^{O\Lambda}(\tau) = |\eta(\tau)|^{-2N}  \Theta_{O\Lambda}(O w(\alpha), Ow(\beta), \tau) \ .
\ee
Note that the deformation rotates the vectors $w(\alpha),w(\beta)$. This is required in order for $Z_{\alpha,\beta}^{O\Lambda}$ to stay in a fixed, finite-dimensional representation of $SL(2,\mathbb{Z})$ after the deformation, so that the full partition function $Z_{\rm dWZW}$ remains modular invariant. In other words, the coordinates of the vectors $w(\alpha),w(\beta)$ are held fixed under deformations, but the basis vectors depend on moduli.

This construction implies that the moduli space of deformed WZW models is locally identical to the Narain moduli space. Globally, the full moduli space is
\be
{\cal M} = O(N) \times O(N) \backslash O(N,N) / \mbox{Dualities} \ ,
\ee
where $O(N) \times O(N)$ are rotations acting individually on the two blocks of $\mathbb{R}^{N,N}$. For a Narain theory alone, the duality group would be the automorphism group of the original lattice, which is isomorphic to $O(N, N, \mathbb{Z})$. In the deformed WZW model there are further restrictions on the dualities because the OPE coefficients must be preserved, so in fact for $N,k>1$ the duality group is a finite subgroup of $O(N,N,\mathbb{Z})$ \cite{hep-th/9302033,hep-th/0304234}.\footnote{Curiously, this means that for each point in the Narain moduli space, there is an infinite set of inequivalent CFTs that all have the same partition function. Although the OPE coefficients are preserved by the deformation, the CFTs are inequivalent because the assignment of OPE coefficients to Virasoro primaries of given weight gets reshuffled by elements of $O(N,N,\mathbb{Z})$.}

\subsection{Setting up the average over moduli}
Under the Haar measure, the moduli space ${\cal M}$ has infinite volume for $k>1$, because the symmetric space $O(N) \times O(N) \backslash O(N,N)$ has infinite volume and the duality group is finite. This is in contrast to the Narain moduli space, which has finite volume for $N \geq 2$. These moduli spaces are locally identical, but in the deformed WZW model, the duality group is finite so ${\cal M}$ is generally much larger. 

To average over the moduli space of the deformed WZW model we must regulate this infinity. Fortunately, for the partition function there is a natural choice:  we simply average over the Narain moduli space instead, 
\be
{\cal M}' = O(N) \times O(N) \backslash O(N,N) / O(N,N,\mathbb{Z}) \ .
\ee
This is well defined because the partition function has the full $O(N,N, \mathbb{Z})$ invariance.
For $N\neq 1$ and $k \neq 1$, it divides the average over ${\cal M}$ by an infinite constant. Thus we define the ensemble average partition function by
\begin{align}
\langle Z_{\rm dWZW} (\tau) \rangle &= \int_{ {\cal M}' } d(\mbox{moduli})  \, Z_{\rm dWZW} (\tau) \ .
\end{align}
We emphasize that this is our choice of ensemble. Any observable can be averaged over this ensemble, although for general observables sensitive to the OPE coefficients, the result may depend on the choice of fundamental domain for $O(N,N,\mathbb{Z})$. For the partition function this not an issue.
Only the boson factor depends on moduli, so 
\begin{align}\label{avZsetup}
\langle Z_{\rm dWZW} (\tau) \rangle &= k^{-N}  |\eta(\tau)|^{-2N} \sum_{\alpha,\beta \in Q/kQ} Z^{PF}_{\alpha,\beta}(\tau) \big\langle \Theta_{\Lambda}(w(\alpha), w(\beta), \tau) \big\rangle \ .
\end{align}
We will carefully define and calculate $\langle \Theta_{\Lambda}\rangle$ in the next section and return to the WZW model in section \ref{s:averagewzw}.

\section{Averaging twisted theta functions}\label{s:theta}

Our first task is to compute the average over moduli space of a Siegel-Narain theta function, with nonzero twists $(a,b)$. The twisted sectors with $a=0$ were treated by Siegel, and interpreted holographically in recent work generalizing the Narain duality to orbifolds \cite{2103.15826} and general lattices \cite{2104.14710}. The main new element here is to allow for independent $a$ and $b$.

\subsection{Siegel-Narain theta functions}

Although we already defined the Siegel-Narain theta function in \eqref{thetaLambda}, this definition is inconvenient for averaging because the twist vectors depend on moduli. We will now introduce a more convenient notation similar to Siegel's.

Let $S_{\mu\nu}$ be a semi-integral quadratic form, \textit{i.e.}, a symmetric matrix with integers on the diagonal and half-integers off the diagonal. For any positive definite matrix $H_{\mu\nu}$ satisfying $H S^{-1} H = S$, define the theta function
\begin{align}\label{fHS}
\Theta_{H, S}(a^\mu, b^\mu, \tau) &= \sum_{n^\mu \in \mathbb{Z}^{2N}} e^{-2\pi  \tau_2 H_{\mu\nu} (n^\mu+b^\mu)(n^\nu + b^\nu) + 2\pi i \tau_1  S_{\mu\nu} (n^\mu+b^\mu)( n^\nu+b^\nu) - 4\pi i S_{\mu\nu} a^\mu (n^\nu + \tfrac{1}{2}b^\nu) } \ ,
\end{align}
where $a^\mu$ and $b^\mu$ are vectors in $\mathbb{R}^{2N}$ called characteristics or twists, and $\tau = \tau_1 + i \tau_2$ with $\tau_{1} \in \mathbb{R}$ and $\tau_2 \in \mathbb{R}_{+}$. We will assume $S_{\mu\nu}$ has signature $(N,N)$ with $N > 2$. 

Siegel developed a formalism to analyze general semi-integral quadratic forms, not necessarily associated to self-dual lattices. \cite{Siegel1,Siegel2,Siegel3} We are interested in the self-dual case relevant to Narain lattices (but we will end up using Siegel's more general result at an intermediate step). The quadratic form associated to Narain lattices is
\be\label{narainS}
T_{\mu\nu} = \frac{1}{2} \begin{pmatrix} 0 & \mathbf{1}_{N\times N} \\ \mathbf{1}_{N\times N} & 0 \end{pmatrix} \ ,
\ee
and the moduli of the Narain lattice are encoded in the block matrix $H$ as
\be
H_{\mu\nu} =\left(\begin{array}{cc} G^{-1}& \frac{1}{2} G^{-1} B\\ -\frac{1}{2} B G^{-1} & \frac{1}{4} (G-B G^{-1} B)\end{array}\right)
\ee
where $G_{ij}$ is the metric and $B_{ij}$ is the antisymmetric $B$-field. We will denote the inner product with the Narain quadratic form by
\be
\langle x, y \rangle = T_{\mu\nu} x^\mu y^\nu \ .
\ee
The theta function $\Theta_{H,T}$ is identical to the Siegel-Narain theta function written as a sum over lattice points in \eqref{thetaLambda},
\be
\Theta_{H,T}(a^\mu, b^\mu, \tau) = \Theta_{\Lambda}(a^\mu U_\mu, b^\mu U_\mu,\tau)
\ee
where the vectors $U_\mu$ are the basis vectors of the Narain lattice as in appendix \ref{app:narain}. The notation $\Theta_{H,T}$ is more convenient for doing the average because the coordinate $a^\mu$ is rational and independent of moduli, unlike the vector $a = a^\mu U_{\mu}$ as explained around \eqref{zOdef}.

To describe the modular transformations of $\Theta$, let
\be\label{defgamma}
\gamma = \begin{pmatrix} f & g \\ c & d \end{pmatrix} \in SL(2,\mathbb{Z}) \ .
\ee
Package the twists into a vector
\be
A^\mu = \begin{pmatrix} a^\mu \\ b^\mu \end{pmatrix} \ .
\ee
The Siegel-Narain theta function for an even self-dual lattice of signature $(N,N)$ transforms as
\be\label{modtheta}
\Theta_{H,T}(\gamma A^\mu, \gamma \tau) =  |c\tau+d|^N \Theta_{H,T}(A^\mu, \tau) \ ,
\ee
where $\gamma$ acts on $\tau$ by fractional linear transformations and on the vector $A^\mu$ by ordinary matrix multiplication,
\be
\gamma \tau = \frac{f \tau + g}{c\tau + d}  \ , \qquad
\gamma A^\mu = \begin{pmatrix} f a^\mu + g b^\mu \\ c a^\mu + d b^\mu \end{pmatrix} \ .
\ee
The formula \eqref{modtheta} is equivalent to the statement that the combination $\Theta / |\eta|^{2N}$ appearing in the partition function of free bosons is modular invariant, so long as we transform both $\tau$ and the twists.

\subsection{Calculating the average}
The definitions and transformation rules above hold for arbitrary twists $a$ and $b$. For generic twists, the action of modular transformations on $\Theta_{H,T}$ produces an infinite dimensional representation of $SL(2,\mathbb{Z})$. However we will be interested in special cases where this action truncates and the representation is finite-dimensional. This occurs when $a^\mu$, $b^\mu$ are rational, because if $b^{\mu}{}' = c a^\mu + d b^\mu$ is integral, then the shift by $b'$ in the theta sum is trivial. We therefore restrict to
\be
a^\mu \in \frac{1}{k} \mathbb{Z}^{2N} , \qquad b^\mu \in \frac{1}{k} \mathbb{Z}^{2N} \ ,
\ee
where $k$ is a positive integer (which will eventually be identified as the WZW level).

We would like to calculate the average over Narain moduli space,
\be\label{averageint}
\langle \Theta_{T}(a^\mu, b^\mu, \tau)\rangle := \int dH \, \Theta_{H,T}(a^\mu, b^\mu, \tau) \ .
\ee
The integral is over the moduli space $O(N)\times O(N) \backslash O(N,N)/O(N,N,\mathbb{Z})$, and $dH$ is the Haar measure normalized by the volume of the moduli space. We will not need the explicit formula for the measure but it can be found in \cite{Siegel1,Siegel2,Siegel3, 2006.04855}.

Siegel calculated the average for $a=0$. Fortunately turning on nonzero $a$ is a straightforward generalization, and the integral can be calculated by any of the various methods described in \cite{2006.04839} and \cite{2006.04855}, or (with somewhat more difficulty) by reducing it to a finite sum of Siegel averages. Whatever method we use, the basic idea is that $\langle \Theta_T\rangle$ is determined by its singularities as $\tau$ approaches rational points on the real line, $\tau \to -\frac{d}{c}$, and these singularities are fixed by modular transformations. We will first use this intuition to guess the answer, and then sketch the rigorous derivation below.

For $c \neq 0$, the $SL(2,\mathbb{Z})$ transformation $\gamma  = \begin{psmallmatrix} f & g \\ c & d \end{psmallmatrix} $ acting on $\tau$ maps the point $ - \frac{d}{c}$ to $i\infty$. For nearby points,
\be
\tau = -\frac{d}{c} + \delta \tau , \qquad |\delta \tau| \ll 1 \ ,
\ee
the transformation rule \eqref{modtheta} gives
\be
\Theta_{H,T}(A^\mu, \tau) \approx |c\tau + d|^{-N} \Theta_{H,T}(\gamma A^\mu, i\infty) \ .
\ee
The corrections to this formula are exponentially suppressed as $\delta \tau \to 0$. Furthermore, the theta function evaluated at $\tau = i\infty$ projects the sum onto contributions with zero scaling dimension (where the scaling dimension is the quantity multiplying $-2\pi \tau_2$ in \eqref{fHS}). In an untwisted theta function, this would project onto the zero vector $n^\mu=0$, corresponding to the ground state of the CFT. In a twisted theta function, $\Theta_{H,T}(a^\mu, b^\mu, i \infty)$ is nonzero only if $b^\mu \in \mathbb{Z}^{2N}$, in which case it is a pure phase $e^{2\pi i \langle a,b \rangle}$. Thus for $\tau \approx - d/c$, the result up to exponentially small corrections is
\begin{align}\label{thetasing}
\Theta_{H,T}(A^\mu,\tau) &\approx |c\tau+d|^{-N} e^{2\pi i 
\langle fa+gb, c a + d b\rangle } \delta(c a^\mu + d b^\mu \in \mathbb{Z})
\end{align}
We use the shorthand
\be
\delta(x \in Y) = \begin{cases} 1 & \mbox{if\ } x \in Y \\ 0 & \mbox{otherwise}\end{cases} \ .
\ee
The right-hand side of \eqref{thetasing} appears to depend on $f,g$, but this must be illusory because the left-hand side depends only on $d/c$. To make this manifest, assume w.l.o.g. that $c$ and $d$ are coprime, and write
\be
f = c m + d^*
\ee
where $m \in \mathbb{Z}$ and $d^*$ is the inverse of $d$ modulo $c$. This is possible because $\det \gamma = f d - g c = 1$. The phase becomes
{\small
\begin{align}
\exp\left[-2\pi i \langle a,b\rangle + \frac{2\pi i}{c} d^* \langle ca+db,ca+db\rangle - \frac{2\pi i}{c} d \langle b,b\rangle+ 2 \pi i m \langle ca+db,ca+db\rangle \right] \ .
\end{align}
}
When the $\delta$-function is satisfied, the quantity multiplying $m$ is $2\pi i \times (\mbox{integer})$ so it drops out, and we now have an expression that depends on only $c$ and $d$.

We are now ready to compute the average. The magic of the Siegel-Weil formula is that the average is simply equal to the sum over singularities, so the natural guess from \eqref{thetasing} is%
\begin{align}\label{generalAverage}
\big\langle \Theta_T(a^\mu, b^\mu, \tau)\big\rangle  &=  \delta(b^\mu \in \mathbb{Z}) + \sum_{(c,d)=1,c>0} |c\tau + d|^{-N} 
 \delta(c a^\mu + d b^\mu \in \mathbb{Z})   \\* 
& \qquad \qquad
\times \exp\left[-2\pi i \langle a,b\rangle + \frac{2\pi i}{c} d^* \langle ca+db,ca+db\rangle - \frac{2\pi i}{c} d \langle b,b\rangle \right]\notag
\end{align}
This turns out to be correct. It is a generalization of the non-holomorphic Eisenstein series. We have included the cusp at infinity separately in the first term.

The answer simplifies for twists living in just one of the two blocks of the Narain inner product in \eqref{narainS}. Suppose that in this basis $a^\mu = (0, a^i)$ and $b^\mu =(0, b^i)$ with $a^i, b^i \in \frac{1}{k}\mathbb{Z}^N$.  This subspace is preserved under modular transformations, and within it all twist inner products vanish, $\langle a,a\rangle = \langle b,b\rangle = \langle a,b\rangle = 0$. Therefore in this case
\begin{align}\label{specialAverage}
\big\langle \Theta_T(a^i, b^i, \tau)\big\rangle  &= \frac{1}{2} \sum_{(c,d)=1} |c\tau + d|^{-N} 
 \delta(c a^i + d b^i \in \mathbb{Z}) \ .
\end{align}
The twists that appear in the deformed WZW models are of this type. In the notation of section \ref{s:wzw},
\begin{align}\label{specialAverage2}
\big\langle \Theta_\Lambda(a, b, \tau)\big\rangle  &= \frac{1}{2} \sum_{(c,d)=1} |c\tau + d|^{-N} 
 \delta(c a + d b \in \Lambda) \ ,
\end{align}
when $a^2=b^2=a\cdot b=0$. In this formula, $\Lambda$ on the right-hand side should be interpreted as the original, undeformed lattice with respect to which $(a,b)$ are defined.
Equivalently,
\begin{align}\label{specialAverage3}
\big\langle \Theta_\Lambda(a, b, \tau)\big\rangle  &= 
(\mbox{Im\ }\tau)^{-N/2} \sum_{\gamma \in \Gamma_{\infty}\backslash \Gamma} (\mbox{Im\ }\gamma \tau)^{N/2}
 \delta(ca + db \in \Lambda) \ ,
\end{align}
where $\Gamma = PSL(2,\mathbb{Z})$ and $\Gamma_{\infty}$ is the abelian group generated by $\begin{psmallmatrix} 1 & 1 \\ 0 & 1 \end{psmallmatrix}$. For prime $k$, this expression simplifies further and becomes an Eisenstein series for the congruence subgroup $\Gamma_0(k)$ as we will discuss in section \ref{ss:spectrum}.

\subsection{Derivation}

We will now sketch the proof of \eqref{generalAverage}. 

We are calculating the average for a self-dual lattice, with both twists $a$ and $b$ non-zero. Siegel solved a slightly different problem: the average for an arbitrary lattice, with only $b$ non-zero. We will show that our case can be reduced to Siegel's. 

First let us state Siegel's main theorem in a convenient way (see \cite{Siegel1,Siegel2,Siegel3} for the original papers, and \cite{siegelbook} for a review in English). Let $S_{\mu\nu}$ be a semi-integral quadratic form of signature $(N,N)$ and choose $b^\mu$ such that $2S_{\mu\nu}b^\nu \in \mathbb{Z}$. The leading behavior of the theta function near rational points $\tau \to - \frac{d}{c}$ (with $c \neq 0$) is of the form
\be
\Theta_{H,S}(0, b^\mu, \tau) \sim |c\tau+d|^{-N} R_S(c,d,b^\mu) \ .
\ee
Crucially, the right-hand side is independent of the moduli matrix $H_{\mu\nu}$. The coefficient $R_S$ is derived by Poisson summation of \eqref{fHS} and is known in closed form, but for our purposes it is enough to know that it exists. (See appendix \ref{app:siegel} for the formula.) Siegel's theorem states that the average over moduli space is the non-holomorphic Eisenstein series defined by summing over singularities,
\be\label{siegelth}
\big\langle \Theta_S(0, b^\mu, \tau) \big\rangle= \delta(b^\mu \in \mathbb{Z}) + \sum_{(c,d)=1, c>0} |c\tau+d|^{-N} R_S(c,d,b^\mu) \ .
\ee
As advertised, the theorem requires $a=0$ but allows for a general quadratic form $S$.

Fortunately we can recast our problem into this form. We want to calculate the average of 
$\Theta_{H,T}(a^\mu, b^\mu, \tau)$
where $ a^\mu \in \frac{1}{k}\mathbb{Z}^{2N}, b^\mu \in \frac{1}{k} \mathbb{Z}^{2N}$ for $k \in \mathbb{Z}_+$, and $T_{\mu\nu}$ is the Narain quadratic form \eqref{narainS}. By decomposing the theta sum into residue classes modulo $k$, $n^\mu = k m^\mu + r^\mu$, we find 
\begin{align}\label{thetaresidues}
\Theta_{H,T}(a^\mu, b^\mu, \tau) 
&= \sum_{r^\mu \in (\mathbb{Z}/k\mathbb{Z})^{2N}} e^{-2\pi i  \langle a, 2r + b\rangle}\Theta_{k^2H,k^2T}( 0, \frac{b^\mu + r^\mu}{k}, \tau) \ .
\end{align}
The twists in the expression on the right-hand side satisfy the assumptions of Siegel's theorem: $a'^\mu = 0$, and $2S_{\mu\nu}'b'^\nu \in \mathbb{Z}$ where $S'_{\mu\nu} = k^2 T_{\mu\nu}$ and $b'^\mu = \frac{1}{k}(b^\mu + r^\mu)$. We can therefore apply \eqref{siegelth} to find the average
{\small
\begin{align}
\big\langle \Theta_T(a^\mu, b^\mu, \tau) \big\rangle &= \delta(b^\mu \in \mathbb{Z}) + \sum_{(c,d)=1,c>0}| c\tau + d|^{-N}  \!\!\!
\sum_{r^\mu \in (\mathbb{Z}/k\mathbb{Z})^{2N}} e^{-2\pi i  \langle  a,2r + b\rangle}
R_{k^2T}(c,d,\frac{b^\mu + r^\mu}{k}) \ .
\end{align}
}
Next, we could plug in Siegel's formula for $R_S$ on the right-hand side, but there is a shortcut. This is a sum over singularities at rational $\tau$. We argued around \eqref{thetasing} that these singularities are fixed by modular invariance of $\Theta_{H,T}$ and are independent of moduli. So there is no need to recalculate them: the final answer must be \eqref{generalAverage}. This completes the derivation.

As a consistency check, we perform the explicit sum of Siegel's $R_{S}$ in appendix \ref{app:siegel} for the case relevant to the WZW model, and (after quite a bit of effort) recover the simple formula \eqref{specialAverage}. In the appendix we also sketch a self-contained derivation of our averaging formula using the method of Maloney and Witten \cite{2006.04839} which turns the averaging integral into a differential equation.

\section{Averaging the deformed WZW model}\label{s:averagewzw}

We now return to the deformed WZW model by assembling the results of sections \ref{s:wzw} and \ref{s:theta}.

\subsection{Partition function}

In section \ref{s:theta} we derived
\be
\big\langle \Theta_{\Lambda}(w(\alpha), w(\beta), \tau) \big\rangle = (\tau_2)^{-N/2} \sum_{\gamma \in \Gamma_\infty \backslash \Gamma} (\mbox{Im\ }\gamma \tau)^{N/2} \delta(c\alpha  + d\beta \in k Q) \ .
\ee
Here we have applied \eqref{specialAverage3} to the WZW model, i.e. to the Narain lattice \eqref{latticewzw} with twists defined in \eqref{twistRel}. To translate the $\delta$-function we used \eqref{abintegers}. Combining this with \eqref{avZsetup} we have
\begin{align}\label{finalZstart}
\big\langle Z_{\rm dWZW}(\tau) \big\rangle = \frac{1}{k^N \tau_2^{N/2} |\eta(\tau)|^{2N}}   \sum_{\alpha,\beta \in Q/kQ} Z_{\alpha,\beta}^{PF}(\tau) 
\sum_{\gamma \in \Gamma_{\infty} \backslash \Gamma} (\mbox{Im\ }\gamma \tau)^{N/2} \delta(c\alpha + d \beta \in kQ) \ .
\end{align}
This is not quite the final answer. The average can only be ascribed a natural holographic interpretation if it can be written in the form $\sum_{\gamma} Z_0(\gamma \tau)$ for some function $Z_0(\tau)$. We will see that it can.

The parafermions satisfy the same modular transformation rule as the bosons,
\be
Z_{\alpha,\beta}^{PF}(\tau) = Z_{\nu, \phi}^{PF}(\gamma \tau)   \quad \mbox{with} \quad \begin{pmatrix}\nu \\ \phi \end{pmatrix} = \gamma \begin{pmatrix} \alpha \\ \beta \end{pmatrix} \ .
\ee
Therefore
\begin{align}
\big\langle Z_{\rm dWZW}(\tau) \big\rangle
&= \frac{1}{k^N \tau_2^{N/2} |\eta(\tau)|^{2N}} \sum_{\alpha,\beta \in Q/kQ} \sum_{\gamma \in \Gamma_\infty \backslash \Gamma} Z^{PF}_{\nu, \phi}(\gamma \tau) (\mbox{Im\ }\gamma \tau)^{N/2} \delta(c\alpha + d \beta \in k Q)
\end{align}
By a simple change of variables we can replace the sum over $\alpha,\beta$ by a sum over $\nu, \phi$,
\begin{align}
\big\langle Z_{\rm dWZW}(\tau) \big\rangle
&= \frac{1}{k^N \tau_2^{N/2} |\eta(\tau)|^{2N}} \sum_{\nu,\phi \in Q/kQ} \sum_{\gamma \in \Gamma_\infty \backslash \Gamma} Z^{PF}_{\nu, \phi}(\gamma \tau) (\mbox{Im\ }\gamma \tau)^{N/2} \delta(\phi \in k Q) \ ,
\end{align}
in which only $\phi = 0$ contributes. Now we plug in the parafermion partition function \eqref{twistedZPF},
\begin{align}
\big\langle Z_{\rm dWZW}(\tau) \big\rangle
&= \frac{1}{(N+1)k^N } \sum_{\lambda \in P_k^+} \sum_{\mu \in P/kQ} \sum_{\nu \in Q/kQ} \sum_{\gamma \in \Gamma_\infty \backslash \Gamma}
 e^{2 \pi i \nu \cdot \mu/k}|c_{\mu}^{\lambda}(\gamma \tau) |^2 
\end{align}
Summing over $\nu$ sets $\mu = 0$ and contributes a factor of $|P/kQ| = (N+1)k^N$, giving our final answer
\begin{align}\label{finalZend}
\big\langle Z_{\rm dWZW}(\tau) \big\rangle &=   \sum_{\gamma \in \Gamma_\infty \backslash \Gamma} \sum_{\lambda \in P_k^+}  |c^\lambda_0(\gamma \tau)|^2 \ .
\end{align}
This is the result quoted in the introduction. 

\subsection{Average of an individual affine representation}
The average contribution of a single WZW primary, $\langle | \chi_{\lambda}(\tau) |^2 \rangle$, also has a Siegel-Weil formula. The string functions for integrable representations at fixed level transform in a finite-dimensional representation $M$ of $SL(2,\mathbb{Z})$ \cite[Chapter 13]{Kac:1990gs}:
\begin{align}
\eta(\tau)^N c^\lambda_\mu(\tau) = \eta(\gamma\tau)^N \sum_{\stackrel{\lambda' \in P_k^+}{\mu' \in P/kQ}}
M^{\lambda \mu'}_{\mu \lambda'}(\gamma) c^{\lambda'}_{\mu'}(\gamma \tau) \ .
\end{align}
Repeating the steps from \eqref{finalZstart} to \eqref{finalZend} we find
\be
\big\langle |\chi_{\lambda}(\tau)|^2 \big\rangle = 
\sum_{\gamma \in \Gamma_\infty \backslash \Gamma} 
\sum_{\stackrel{\lambda_1,\lambda_2 \in P_k^+}{\mu_1,\mu_2 \in P/kQ} }M^{\lambda\mu_1}_{0 \lambda_1}(\gamma) \bar{M}^{\lambda \mu_2}_{0 \lambda_2} (\gamma) c^{\lambda_1}_{\mu_1}(\gamma \tau) \bar{c}^{\lambda_2}_{\mu_2}(\gamma \btau) \ .
\ee
This is a Siegel-Weil formula for the mod-squared characters of an affine Lie algebra.

\subsection{Average spectrum for prime $k$}\label{ss:spectrum}

We will now rewrite the result in a way that makes the spectrum of the average theory easier to discern. The average spectrum will of course be unitary, because we are averaging an ensemble of unitary CFTs with a positive measure. This was not the case in pure gravity \cite{0712.0155}, where the starting point was a Poincar\'e series rather than an ensemble of CFTs, and indeed the Poincar\'e series for pure gravity led to a negative density of states in some regimes \cite{1906.04184}.

We restrict to prime $k$ in this section, for technical reasons that will become clear. As the string functions are complicated functions that in general don't have closed analytical forms, we will do the analysis starting from $\eqref{avZsetup}$, instead of $\eqref{finalZend}$.

The Poincar\'e sum for the averaged theta function \eqref{specialAverage3} only has contributions from coprime $(c,d)$ such that $c \alpha + d \beta \in k Q$. For prime $k$, this condition can only be satisfied when the twists are colinear vectors. Therefore the important averages are
\begin{align}\label{f0fg}
f_0(\tau) &= \big\langle \Theta_{\Lambda}(0,0,\tau) \big\rangle\\
f(\tau) &= \big\langle \Theta_{\Lambda}(w(\alpha), 0, \tau) \big\rangle \notag \qquad (\alpha \neq 0)\\
g(\tau) &= \big\langle \Theta_{\Lambda}(0, w(\beta), \tau) \big\rangle \qquad (\beta \neq 0) \notag
\end{align}
$f$ and $g$ are independent of the twists, because for example the condition $c\alpha \in kQ$ is equivalent to $k | c$ for any $\alpha \neq 0$. The only other non-zero averages for prime $k$ are
\be
g(\tau - h) = \big\langle \Theta_{\Lambda}(w(h\beta), w(\beta), \tau) \ ,
\ee
for $h=0\dots k-1$. 

A useful observation is that the basic ingredients $f_0, f, g$ are all proportional to Eisenstein series. The untwisted term $f_0$ is the non-holomorphic Eisenstein series for $SL(2,\mathbb{Z})$; this is the same quantity that appears in the Narain duality \cite{2006.04839,2006.04855}.
The twisted average $f$ is proportional to an Eisenstein series for the congruence subgroup \begin{equation}
    \Gamma_0(k)=\left\{\left.\left(\begin{array}{cc}
        a & b \\
        c & d
    \end{array}\right)\in \Gamma\right|c=0\textrm{ mod }k\right\} \ .
\end{equation}
These Eisenstein series depend on a choice of Dirichlet character mod $k$. 
For the character $\chi$, the real-analytic Eisenstein series is defined as
\begin{align}
E_k(\tau, s; \chi) &= \sum_{\gamma \in \Gamma_{\infty} \backslash \Gamma_0(k)} \overline{\chi(d)} (\mbox{Im\ }\gamma \tau)^s \ .
\end{align}
Elements of $\Gamma_{\infty} \backslash \Gamma_0(k)$ correspond to coprime pairs $(c,d)$ with $c = 0 \mod k$. Therefore, for the principal character $\chi_0(d)$, defined to be 1 if $(d,k) = 1$ and 0 otherwise, we have
\begin{align}
 E_k(\tau, s) := E_k(\tau, s; \chi_0) &= \sum_{\gamma \in \Gamma_{\infty} \backslash \Gamma_0(k)} (\mbox{Im\ }\gamma \tau)^s
 \\
&= \frac{1}{2}(\mbox{Im\ }\tau)^s \sum_{(c,d)=1} |c k\tau + d|^{-2s} \chi_0(d) \ .
\end{align}
To relate this to the averaged theta functions, first consider $\alpha \neq 0$, $\beta = 0$. We found the average
\begin{equation}
    \begin{aligned}
f(\tau) =  \big\langle \Theta_{\Lambda}(w(\alpha), 0, \tau) \big\rangle &=1+\sum_{(c,d)=1,c>0} |c \tau+ d|^{-N} \delta(c \alpha \in k Q)\\
&= (\mbox{Im\ }\tau)^{-N/2} E_k(\tau, \frac{N}{2}) \ .
\end{aligned}
\end{equation}
Similarly for nonzero $\beta$,
\begin{align}
g(\tau) =  \big\langle \Theta_{\Lambda}(0, w(\beta), \tau) \big\rangle 
&= \sum_{(c,d)=1,c>0} |c \tau + k d|^{-N} \chi_0(c) \ ,
\end{align}
which is proportional to  $E_k(-\frac{1}{\tau}, \frac{N}{2})$.
The Eisenstein series at prime level can be easily expressed in terms of the Eisenstein series for $SL(2,\mathbb{Z})$. In appendix \ref{app:fourier} we derive the relations
\begin{align}\label{ff0relmain}
f(\tau) &= \frac{1}{k^N-1}\left(k^N f_0(k\tau) - f_0(\tau)\right) \\
g(\tau) &= \frac{1}{k^N-1}\left( f_0( \frac{\tau}{k} ) - f_0(\tau) \right) \ .\notag
\end{align}
Now we will use these results to rewrite $\langle Z_{\rm dWZW} \rangle$ in a way that makes the spectrum more apparent. Recall that the averaged partition function is
\begin{align}\label{bvZsetup}
\langle Z_{\rm dWZW} (\tau) \rangle &= k^{-N}  |\eta(\tau)|^{-2N} \sum_{\alpha,\beta \in Q/kQ} Z^{PF}_{\alpha,\beta}(\tau) \big\langle \Theta_{\Lambda}(w(\alpha), w(\beta), \tau) \big\rangle   \ .
\end{align}
At prime $k$,
{\small
\begin{align}
\big\langle Z_{\rm dWZW}(\tau) \big\rangle &= \frac{1}{k^N |\eta(\tau)|^{2N}} \left( 
Z_{0,0}^{PF}(\tau) f_0(\tau) + f(\tau) \sum_{\stackrel{\alpha \in Q/kQ}{\alpha \neq 0}} Z_{\alpha,0}^{PF}(\tau)  + \sum_{\stackrel{\alpha \in Q/kQ}{\alpha \neq 0}} \sum_{h=0}^{k-1} Z_{h\alpha, \alpha}^{PF}(\tau) g(\tau-h) \right) \ .
\end{align}
}
Plugging in the twisted parafermion partition function \eqref{twistedZPF}, this becomes
\begin{align}\label{zccnice}
\big\langle Z_{\rm dWZW}(\tau) \big\rangle &=
\sum_{\lambda \in P_k^+}\left(
h_1(\tau) |c^{\lambda}_0(\tau)|^2
 + h_2(\tau) \sum_{\mu \neq 0} |c^\lambda_\mu(\tau)|^2 
  + \sum_{\alpha\neq 0, \mu} c^{\lambda}_\mu(\tau) \bar{c}^{\lambda}_{\mu - \alpha} (\btau)
  h_3^{\mu, \alpha}(\tau)
\right)
\end{align}
where
\begin{align}
h_1(\tau) &= \frac{1}{k^N}\left( f_0(\tau) +(k^N-1) f(\tau) \right) = f_0(k\tau)  \label{h1}\\
h_2(\tau) &= \frac{1}{k^N}\left( f_0(\tau) - f(\tau) \right)  = \frac{1}{k^N-1}( f_0(\tau) - f_0(k\tau)) \label{h2}\\
h_3^{\mu, \alpha}(\tau) &=   \frac{1}{k^N} \sum_{h=0}^{k-1} e^{ i \pi h \alpha \cdot(2\mu - \alpha)/k} g(\tau-h)
\end{align}
The full spectrum is thus a convolution of the parafermion spectrum with that of $h_1$, $h_2$, and $h_3^{\mu,\alpha}$. The latter spectra can be found by a Fourier transform. 
The Fourier expansions of $h_1$ and $h_2$ are given in \eqref{goodh1} and \eqref{goodh2}. They are manifestly positive, and they have continuous spectra starting at the unitarity bound. 
Finally, consider the $h_3$ term in \eqref{zccnice}. The Fourier transform of $g(\tau)$ is 
\be
g(\tau) = \sum_{l \in \mathbb{Z}} g_\ell(\tau_2) e^{2\pi i l \tau_1/k}
\ee
where the coefficients $g_\ell$ are given in \eqref{goodg0} and \eqref{goodgl} and are also manifestly positive.
Performing the sum over $h$, we get
\be
\sum_{\lambda \in P_k^+} \sum_{\alpha\neq 0, \mu} c^{\lambda}_\mu \bar{c}^{\lambda}_{\mu - \alpha} 
  h_3^{\mu, \alpha}(\tau)=k^{1-N} \sum_{\lambda,\mu,\alpha\neq 0} \sum_{l=-\infty}^{\infty} c_{\mu}^{\lambda} \overline{c}_{\mu-\alpha}^{\lambda}  g_l(\tau_2) e^{2\pi i l \frac{\tau_1}{k}} \delta(l-\alpha\cdot\mu+\alpha^2/2  \in k \mathbb{Z})
\ee
Although the $\exp(2\pi i l\tau_1/k)$ is not periodic under $\tau_1\to \tau_1+1$, the full expression is periodic indicating a spectrum with integers spins, because the string functions obey (see for example \cite[\S 14.5]{DiFrancesco:1997nk})
\be
c_{\mu}^\lambda(\tau+1) \bar{c}_{\mu-\alpha}^{\lambda}(\btau+1) = e^{ i\pi (\alpha^2 - 2 \mu \cdot \alpha)/k} c_{\mu}^\lambda(\tau) \bar{c}_{\mu-\alpha}^{\lambda}(\btau) \ .
\ee

To summarize, for prime $k$, the average partition function is \eqref{zccnice}, where the coefficients $h_1, h_2, h_3^{\mu,\alpha}$ are all known explicitly and we have given their Fourier expansions. The spectrum is unitary, continuous, and consists only of integers spins. Of course the the spectrum of the deformed WZW model is unitary and integer-spin at any point in moduli space, so these features were guaranteed by the setup. The continuous spectrum is a consequence of averaging.

\section{Discussion of the holographic interpretation}\label{s:discussion}

A natural ansatz for the partition function of 3d gravity is  the Poincar\'{e} series \cite{0712.0155}
\be\label{zpoincare}
Z(\tau) = \sum_{\gamma \in \Gamma_\infty \backslash \Gamma} Z_{0}(\gamma \tau) \ ,
\ee
where $Z_0(\tau)$ is the perturbative partition function of bulk fields on the solid torus with complex structure $\tau$ at infinity. The solid torus is a Euclidean black hole, and the Poincar\'{e} sum is interpreted as a sum over the Maldacena-Strominger family of $SL(2,\mathbb{Z})$ black holes \cite{hep-th/9804085}. In a general theory of 3d gravity there are additional contributions to the path integral, but \eqref{zpoincare} might be sufficient (or at least dominant) in simple examples. The averaged partition function of Narain CFTs takes this form; this observation is the starting point for the Narain duality proposed in \cite{2006.04839, 2006.04855} (see also \cite{1111.1987, 1907.06656} for earlier efforts to apply Poincar\'{e} series to rational CFTs).
A similar sum can be used to calculate indices in supersymmetric theories \cite{hep-th/0005003, 1111.1161}, but in the present discussion, $Z$ is not holomorphic in $\tau$. 

We have shown that the averaged partition function in the deformed WZW model $\langle Z_{\rm dWZW}\rangle$ can also be written as a Poincar\'{e} series. This suggests a holographic interpretation. To succeed, we need a three-dimensional theory for which the perturbative partition function on a solid torus is
\be
Z_{0}(\tau) = \sum_{\lambda \in P_k^+} |c^\lambda_0(\tau)|^2 \ .
\ee
Let us reorganize this into
\be\label{zwithb}
Z_{0}(\tau) =  \frac{1}{|\eta(\tau)|^{2N} } \sum_{\lambda \in P_k^+} |b^\lambda_0(\tau)|^2 \ ,
\ee
where $b^\lambda_\mu(\tau) = \eta(\tau)^N c^\lambda_\mu(\tau)$ is the branching function of the $SU(N+1)_k/U(1)^N$ coset, i.e., the parafermion character. We have factored out the term $|\eta(\tau)|^{-2N}$, which is the perturbative partition function of three-dimensional $U(1)^{2N}$ Chern-Simons theory \cite{1903.05100,2006.04839, 2006.04855}. For $k=1$, the branching functions are trivial and there is only one non-zero contribution to the sum over $\lambda$, so we recover the Narain duality. This is expected, because $SU(N+1)_1$ lives in the moduli space of $N$ free bosons. 

The form \eqref{zwithb} suggests that at the perturbative level, the holographic dual of the average deformed WZW model is $U(1)^{2N}$ Chern-Simons theory coupled to topological `matter fields'. The matter fields should account for the parafermion contribution to \eqref{zwithb}. There is a well known construction of holomorphic parafermion characters using Chern-Simons theory \cite{Witten:1988hf,Witten:1991mm,Isidro:1991fp,Gawedzki:2001ye}, but what is needed is a non-holomorphic combination of left and right movers. For abelian Chern-Simons theory this was achieved in \cite{hep-th/0403225}, but to our knowledge the nonabelian generalization needed to write a Lagrangian for our bulk theory does not appear in the literature, so we defer this problem to the future.\footnote{An alternative approach would be to implement the orbifold $\Gamma_k$ directly on the bulk theory, as in \cite{2103.15826}.}

In addition to finding the bulk Lagrangian, there are several concrete directions to pursue. The duality can be explored at higher genus, as in \cite{2006.04855}; the double torus is especially interesting because it encodes energy spacing statistics of the ensemble \cite{2006.08648}. As mentioned in the introduction, there are nontrivial correlation functions in the deformed WZW duality that can be studied using methods similar to \cite{2103.15826}. It may be interesting to analyze the thermodynamics of the averaged theory as well. In particular, we have not calculated the spectral gap or compared to the Cardy formula (see \cite{2006.04839} for the Narain case). Part of our motivation for studying this class of theories is that the symmetry class of theories with $U(1)^N$ chiral algebra and $c>N$ interpolates between the pure Virasoro case and the maximally symmetric case $c=N$ relevant to sphere packing \cite{1905.01319,2006.02560}; it would be interesting to understand whether this ultimately gives insight into the pure Virasoro limit, relevant to Einstein gravity in three dimensions.

We conclude with a few more general open questions:

\begin{enumerate}
\item Is there a non-perturbative definition of the bulk theory, in the Narain duality or the duality described here, that does not rely on an ad hoc prescription for the sum over manifolds?
\item Is there a worldsheet interpretation of the ensemble average? The WZW models in particular appear as worldsheet sigma models. It would also be very interesting to average over Calabi-Yau moduli, but this requires new ideas \cite{2103.15826}.
\item Can these methods be extended to non-compact gauge groups? If so, they can be applied to the $SL(2,R)/U(1)$ black hole \cite{Witten:1991yr}, or string theory on AdS$_3$ \cite{hep-th/0001053}. Current-current deformations of non-compact CFTs have been studied in, e.g., \cite{Israel:2003ry, Orlando:2005vt, Detournay:2010rh, Giribet:2021cyy}.

\item Can these holographic dualities be embedded into string theory? If so, what is the meaning of the ensemble average? In string theory we expect the dual to be a single CFT, but it could be useful to introduce an explicit average to calculate certain observables. For example, maybe the average allows some statistical properties of the UV theory to be accessed in supergravity. 
\end{enumerate}

\ \\
\bigskip

\noindent \textbf{Acknowledgments} We thank Nathan Benjamin, Ryan Bilotta, Henry Cohn, Minli Feng, Chao-Ming Jian, Alex Maloney, Hirosi Ooguri, Edgar Shaghoulian, and Edward Witten for  discussions. This work is supported by the Simons Foundation through the Simons Collaboration on the Non-perturbative Bootstrap.

\appendix

\section{Details on orbifolding}\label{app:orbifold}

\subsection{Comparison to F\"orste and Roggenkamp}
In section \ref{s:wzw}, we constructed the WZW model as an orbifold by the group $\Gamma_{k} \cong Q/kQ$. This is based on \cite{hep-th/0304234}, but our orbifold action appears to be slightly different, so here we will explain the differences. 

First, the orbifold group in \cite{hep-th/0304234} is described as $P/kQ$. This is not a significant difference, because the elements of $P/Q$ correspond to the center of $SU(N+1)$ and act trivially. In particular, $Z^{PF}_{\alpha,\beta}$ vanishes for either $\alpha$ or $\beta$ in $P/Q$, so \eqref{zOrb} can be written as a sum over $\alpha, \beta \in P/kQ$, in agreement with \cite{hep-th/0304234}.

More importantly, the orbifold action in \cite{hep-th/0304234} (see also earlier work by Gepner \cite{Gepner:1987sm}) is defined to be \eqref{gammaaction} without the $B^0$ terms. This leads to ambiguities, because the resulting phase depends on the coset representatives chosen for $\gamma, \mu, \bar{\mu}$.  Our definition of the symmetries resolves these ambiguities. To see this, rewrite the phase in \eqref{gammaaction} as
$\exp\left( \frac{2\pi i }{k} (\gamma^i R^+_{ij} \mu^j + \gamma^i R^-_{ij} \bar{\mu}^j)  \right)$ where $R^{\pm}_{ij} = \frac{1}{2}(G^0_{ij} \mp B^0_{ij})$. The matrices $R^{\pm}_{ij}$ have integer entries, so this is invariant under shifting $\gamma, \mu$, or $\bar{\mu}$ by an element of $kQ$. Without $B^0$, there would be a sign ambiguity under these shifts.\footnote{The twisted partition functions can be calculated from the orbifold symmetries by a standard procedure: (i) Insert the symmetry into the trace $Z_{\alpha, 0} = \mbox{Tr\ }g_{\alpha} q^{L_0-c/24} \bar{q}^{\bar{L}_0-c/24}$, (ii) Act with $\tau \to -1/\tau$ to find $Z_{0, \alpha}$, (iii) Insert a different symmetry to find $Z_{\beta,\alpha}$. We came upon \eqref{gammaaction} as follows. First we calculated the partition function of $N$ free bosons with twisted boundary conditions, using standard path integral methods. This leads to \eqref{twistedZB}, and from here one can easily work backwards through this 3-step procedure to determine the orbifold symmetries. }


In the rest of this appendix, we elaborate on the orbifold sum, Eq. (\ref{zOrb}).\par
\subsection{Vacuum Normalization}
First, let us explain the normalization of the parafermion partition function in Eq. (\ref{twistedZPF}). The coefficient, $(N+1)^{-1}$, is chosen to normalize the vacuum state to have unit coefficient (see \cite[\S 18.2]{DiFrancesco:1997nk}). Consider the untwisted partition function of the parafermions:
\begin{equation}
Z^{PF}_{0,0}(\tau) = 
\frac{1}{N+1} |\eta(\tau)|^{2N} \sum_{\lambda \in P_k^+} \sum_{\mu \in P/kQ} |c_\mu^\lambda(\tau)|^2
\end{equation}
Here we prove a useful identity: 
\begin{equation}
\sum_{\lambda \in P_k^+}\sum_{\mu \in P/kQ} c^{\lambda}_{\mu}\bar{c}^{\lambda}_{\mu-\beta}=(N+1)\sum_{\lambda \in P_k^+}\sum_{\mu \in P/kP} c^{\lambda}_{\mu}\bar{c}^{\lambda}_{\mu-\beta} \ .
\end{equation}
This identity can be shown by applying outer automorphisms. The automorphism group is $O(\hat{su}(N+1))=\mathbb{Z}_{N+1}$ whose generator we denote by $a$. The group acts on affine weights, but we only care about its action on the finite part of the affine weights. For $A=a^n\in O(\hat{su}(N+1))$ with $n=0\dots N$, this is
\begin{equation}
    a^n\lambda=ke^n+w_{a^n}\lambda
\end{equation}
in which $\{e^i\}$ are the fundamental weights, $e^0 = 0$, and $w_{a^n}$ is a Weyl transformation corresponding to $a^n$. 
We note that $\{0\}\cup\{ke^i\}=kP/kQ$. Thus, for a given $n$ and $\mu$, we have 
\begin{align}
    \sum_{\lambda\in P_k^+}c^\lambda_{\mu+ke^n}\bar{c}^\lambda_{\mu+ke^n-\beta}=&\sum_{\lambda\in P_k^+} c^{a^{-n}\lambda}_{w_{a^n}^{-1}\mu}\bar{c}^{a^{-n}\lambda}_{w_{a^n}^{-1}(\mu-\beta)}\\
    =&\sum_{\lambda\in P_k^+} c^{a^{-n}\lambda}_{\mu}\bar{c}^{a^{-n}\lambda}_{\mu-\beta}\notag\\
    =&\sum_{\lambda\in P_k^+}c^{\lambda}_{\mu}\bar{c}^{\lambda}_{\mu-\beta}\notag
\end{align}
in which we used the fact that $c^{A\lambda}_{A\mu}=c^\lambda_\mu$ in the second equality and $c^\lambda_{w\mu}=c^\lambda_\mu$ in the third equality. The last equality simply follows from $a^n$ being a symmetry of the set $P_k^+$. The equality above means that, for a particular sum of string functions of this kind, we are able to shift its $\mu$ index by $ke^i$ without changing its value. Thus,

\begin{align}
(N+1)\sum_{\lambda \in P_k^+}\sum_{\mu \in P/kP} c^{\lambda}_{\mu}\bar{c}^{\lambda}_{\mu-\beta}=\sum_{\lambda \in P_k^+}\sum_{\delta\in kP/kQ}\sum_{\mu \in P/kP} c^{\lambda}_{\mu+\delta}\bar{c}^{\lambda}_{\mu+\delta-\beta}=\sum_{\lambda \in P_k^+}\sum_{\mu \in P/kQ} c^{\lambda}_{\mu}\bar{c}^{\lambda}_{\mu-\beta}
\end{align}
We can thus write
\begin{equation}\label{kPsum}
Z^{PF}_{0,0}(\tau) = 
|\eta(\tau)|^{2N} \sum_{\lambda \in P_k^+} \sum_{\mu \in P/kP} |c_\mu^\lambda(\tau)|^2
\end{equation}

Finally, if we limit ourselves to $\mu\in P/kP$, the vacuum term $(q \bar{q})^{-c_{PF}/24}$ comes entirely from the $|c_0^0|$ term, whose coefficient is $1$. Thus, the partition function is correctly normalized.\par
\subsection{Orbifold Sum}
Now we derive Eq. (\ref{zOrb}). Using the identity Eq. (\ref{kPsum}), we are able to write
{\small\begin{align}
    k^{-N} \sum_{\alpha,\beta \in Q/kQ} &Z_{\alpha,\beta}^{PF}(\tau) Z^\Lambda_{\alpha,\beta}(\tau) \notag \\
    &=k^{-N}\sum_{\alpha,\beta\in Q/kQ}\sum_{\lambda\in P_k^+}\sum_{\mu'\in P/kP}\sum_{\mu\in P/kQ}e^{\pi i \alpha\cdot (2\mu-2\mu')}c^{\lambda}_{\mu'}(\tau)\bar{c}^\lambda_{\mu'-\beta}(\bar{\tau})\theta_{\mu}(\tau)\bar{\theta}_{\mu-\beta}(\bar{\tau})\notag\\
    &=\sum_{\beta\in Q/kQ}\sum_{\lambda\in P_k^+}\sum_{\mu'\in P/kP}\sum_{\mu\in P/kQ}\delta(\mu-\mu'\in k P)c^{\lambda}_{\mu'}(\tau)\bar{c}^\lambda_{\mu'-\beta}(\bar{\tau})\theta_{\mu}(\tau)\bar{\theta}_{\mu-\beta}(\bar{\tau})\notag\\
    &=\sum_{\beta\in Q/kQ}\sum_{\lambda\in P_k^+}\sum_{\mu\in P/kQ}c^{\lambda}_{\mu}(\tau)\bar{c}^\lambda_{\mu-\beta}(\bar{\tau})\theta_{\mu}(\tau)\bar{\theta}_{\mu-\beta}(\bar{\tau})\notag\\
    &=\sum_{\beta\in P/kQ}\sum_{\lambda\in P_k^+}\sum_{\mu\in P/kQ}c^{\lambda}_{\mu}(\tau)\bar{c}^\lambda_{\mu-\beta}(\bar{\tau})\theta_{\mu}(\tau)\bar{\theta}_{\mu-\beta}(\bar{\tau})\notag\\
    &=\sum_{\lambda\in P_k^+}\left|\sum_{\mu\in P/kQ}c^{\lambda}_{\mu}(\tau)\theta_\mu(\tau)\right|^2\notag\\
    &=\sum_{\lambda\in P_k^+}|\chi_\lambda(\tau)|^2\notag\\
    &=Z_{\rm WZW}(\tau) \ .
\end{align}
}
In line $5$ we extended the range of the $\beta$ summation because the product of string functions vanish if $\beta\not\in Q$. 

\section{Details on Narain lattices}\label{app:narain}

Denote a vector in $\lambda \in \mathbb{R}^{N,N}$ by $\lambda = (\lambda_+, \lambda_-)$, where $\lambda_+$ and $\lambda_-$ are each vectors in $\mathbb{R}^N$. The inner product is
\be
\lambda \cdot \mu = \lambda_+ \cdot \mu_+ - \lambda_- \cdot \lambda_- \ .
\ee
For vectors in $\mathbb{R}^N$, the inner product is the ordinary Euclidean dot product.

A Narain lattice is an even self-dual lattice. Any Narain lattice in $\mathbb{R}^{N,N}$ can be constructed starting from a Euclidean lattice $L \subset \mathbb{R}^N$ as follows. Start with a basis $v_i$ for $L$, and dual basis $v^i$ for $L^*$ satisfying $v^i \cdot v_j = \delta^{i}_j$. (Each $v_i$ is a vector in $\mathbb{R}^N$ with components $v_i^a$, but we will suppress the lattice index $a$ which is raised and lowered with $\delta_{ab}$.) The basis for $L$ defines a metric
\be
G_{ij} = v_i \cdot v_j \ .
\ee
Define the $\mathbb{R}^{N,N}$ vectors
\begin{align}
V^i &= (v^i, v^i) \\
\hat{V}_i &= \frac{1}{2} (v_i + B_{ij}v^j, -v_i + B_{ij} v^j) \ ,
\end{align}
where $B_{ij} = -B_{ji}$. The set of $2N$ vectors $\{ V^i, \, \hat{V}_j \}$ is a basis for a Narain lattice, $\Omega$. The basis vectors satisfy
\be
V^i \cdot \hat{V}_j = \delta^i_j , \quad V^i \cdot V^j = \hat{V}_i \cdot \hat{V}_j = 0 \ ,
\ee
which makes it easy to check that $\Omega$ is even and self-dual. The moduli of the Narain lattice are the antisymmetric tensor $B_{ij}$ and the metric $G_{ij}$. There are also $O(N) \times O(N)$ rotations, but these do not affect the CFT or the theta function.

Introduce an index $\mu = 1, \dots, 2N$ running over all the Narain basis vectors, 
\be
U_\mu = (V^i, \hat{V}_i) \ .
\ee
Points on the lattice are
\be
\lambda = n^\mu U_\mu = w_iV^i + p^i \hat{V}_i \ ,
\ee
with $n^\mu \in \mathbb{Z}^{2N}$. The integers $p^i$ and $w_i$ are momentum and winding of the compact boson.

The Narain lattice $\Lambda$ associated to the level-$k$ WZW model was defined in \eqref{latticewzw}. The corresponding Euclidean lattice is $L = \sqrt{k} Q$, so in terms of the roots and weights, $v_i = \sqrt{k}e_i$, $v^i = \frac{1}{\sqrt{k}}e^i$. The moduli are   \cite{Ginsparg:1986bx} 
\be
G_{ij} = k G^0_{ij} \ , \qquad
B_{ij} = k B^0_{ij} 
\ee
where $G_{ij}^0$ is the Cartan matrix and $B^0_{ij}$ was defined below \eqref{B0}.
The basis vectors are
\be
V^i = \frac{1}{\sqrt{k}}(e^i, e^i) \ , \quad
\hat{V}_i = \frac{\sqrt{k}}{2}( e_i + B^0_{ij} e^j, -e_i + B^0_{ij} e^j ) \ .
\ee
The twists \eqref{twistRel} are therefore 
\be
w(\alpha) = \frac{1}{k}  \alpha^i \hat{V}_i \ , \quad
w(\beta) = \frac{1}{k} \beta^i \hat{V}_i 
\ee
where $\alpha^i  = e^i \cdot \alpha$, $\beta^i = e^i \cdot \beta$. In the notation used in section \ref{s:theta}, $a = a^\mu U_\mu$ and $b = b^\mu U_\mu$, the components are
\be\label{abintegers}
a^\mu(\alpha) = ( \vec{0}, \frac{1}{k} \alpha^i ) \ , \quad
b^\mu(\beta) = (\vec{0}, \frac{1}{k} \beta^i) \ .
\ee

\section{Details on Siegel's formula}\label{app:siegel}

In this appendix we give some more details on Siegel's averaging formula, and use it to reproduce our average \eqref{specialAverage} directly from Siegel's theorem as stated in \cite{siegelbook}. We also rederive our result using the method of Maloney and Witten based on a modular differential equation.

\subsection{The Averaging of partition functions $\Theta_{H,T}(0,b^\mu,\tau)$}
We start with the case $a = 0$. Note that $\Theta_{H,T}(0,b^\mu,\tau)=\Theta_{kH,kT}(0,b^\mu,\frac{\tau}{k})$. This rescaling is necessary so that $kT b^\mu$ has integer entries to apply Siegel's formula. In our basis, $b^\mu=(0,b^i)$.\par
The Siegel-Weil formula \cite{Siegel1,Siegel2,Siegel3,siegelbook} tells us that
\begin{align}
     &\braket{\Theta_T(0,b^\mu,\tau)}=\int dH \Theta_{kH,kT}(0,b^\mu,\frac{\tau}{k})\\
     &\ \ =\gamma_{b}+\frac{1}{k^N} \sum_{(c,d)=1,d>0} d^{-2N} \left|\frac{\tau}{k}-\frac{c}{d}\right|^{-N} R_{S}(c,d,b^\mu) \notag
\end{align}
in which 
\begin{equation}
R_S(c,d,b^\mu)= \sum_{g \in (\mathbb{Z}/d\mathbb{Z})^{2N}} \exp(2\pi i \frac{c}{d} k\braket{g+b,g+b}) 
\end{equation}
is the singularity coefficient mentioned in Eq. (\ref{siegelth}), and derived in \cite{siegelbook}.
$\gamma_b=0$ unless $b=0$, in which case $\gamma_b=1$. We note that pairs $(c,d)$ where $d>0$ can be mapped to elements in $\Gamma_\infty \backslash SL(2,\mathbb{Z})$. That is, we write matrices in $SL(2,\mathbb{Z})$ as
\begin{equation}
    \gamma=\left(\begin{array}{cc}
        f & g \\
        c & -d
    \end{array}\right) \ .
\end{equation}
We simplify the last sum of phases as follows. First we write $g=(m,n)$ in which $m,n$ are $N$ dimensional integer mod $d$ vectors:
\begin{align}
R_S(c,d,b^\mu)=&\sum_{g \in (\mathbb{Z}/d\mathbb{Z})^{2N}}\exp(2 \pi i \frac{c}{d} k \braket{g+b,g+b}))\\ 
=&\sum_{m,n \in (\mathbb{Z}/d\mathbb{Z})^{N}}\exp(2 \pi i \frac{c}{d}km\cdot(n+b))\notag\\ 
=&\prod_{i=1}^N\sum_{m_i,n^i=1}^{d}\exp(2 \pi i \frac{c}{d}k m_i(n^i+b^i))\notag\\ 
=&\prod_{i=1}^N\sum_{n_i=1}^{d}\frac{1-\exp(2 \pi i c k b^i)}{1-\exp(2 \pi i \frac{c}{d}k(n^i+b^i))} \notag
\end{align}
This expression is zero if the denominator is not zero because the numerator is always zero. When the denominator is zero, the fraction is $d$ by l'Hospital's rule. Thus
\begin{align}
R_S(c,d,b^\mu)=d^N\prod_{i=1}^N \# \{ n^i \in (\mathbb{Z}/d \mathbb{Z})^N \  | \   kn^i+b^i\text{ mod } d=0 \}
\end{align}
Now we have a lemma:
\begin{lemma}
The number of solutions of 
\begin{equation}
kx+b\text{ mod } d=0
\end{equation}
is equal to $(k, d)$ when $(k,d)|b$; otherwise it is zero.
\end{lemma}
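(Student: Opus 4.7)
The plan is to analyze the linear congruence $kx + b \equiv 0 \pmod{d}$, or equivalently $kx \equiv -b \pmod{d}$, using the standard theory of linear congruences. Set $g = (k,d)$. I will split the argument into two cases according to whether $g \mid b$.

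For the first (no-solution) case, assume $g \nmid b$. If $x \in \mathbb{Z}/d\mathbb{Z}$ were a solution then there would exist $y \in \mathbb{Z}$ with $kx + dy = -b$. Since $g \mid k$ and $g \mid d$, the left-hand side is divisible by $g$, forcing $g \mid b$, a contradiction. Hence no solution exists.

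For the second (counting) case, suppose $g \mid b$. Write $k = g k'$, $d = g d'$, $b = g b'$ with $(k', d') = 1$. The congruence $kx \equiv -b \pmod{d}$ is equivalent to
\begin{equation}
    k' x \equiv -b' \pmod{d'}.
\end{equation}
Because $(k', d') = 1$, the element $k'$ is invertible modulo $d'$, so this reduced congruence has a unique solution $x \equiv x_0 \pmod{d'}$ for some $x_0 \in \mathbb{Z}/d'\mathbb{Z}$. Lifting back to solutions modulo $d = g d'$, the solution set is precisely
\begin{equation}
    \{\, x_0 + j d' \pmod{d} \;:\; j = 0, 1, \ldots, g-1 \,\},
\end{equation}
which consists of exactly $g = (k,d)$ distinct elements of $\mathbb{Z}/d\mathbb{Z}$. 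Combining the two cases yields the claim.

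There is no real obstacle here: the statement is a classical fact about linear congruences, and the only thing to be careful about is the lift from solutions modulo $d'$ to solutions modulo $d$, where one must check that the $g$ lifts $x_0 + j d'$ are indeed pairwise distinct modulo $d$ (which is immediate since $0 \le j d' < d$ for $0 \le j < g$).
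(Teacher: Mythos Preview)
Your proof is correct and follows essentially the same approach as the paper: both handle the no-solution case by a divisibility obstruction, then divide through by $g=(k,d)$ to reduce to a coprime congruence with a unique solution modulo $d'$, and finally lift to $g$ distinct solutions modulo $d$. Your write-up is in fact slightly cleaner in the lifting step.
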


\begin{proof}
When $(k,d)$ does not divide $b$, the equation obviously has no solutions; any solution would introduce a contradiction because one side is divisible by $(k,d)$ and the other side is not.\par
Otherwise consider first the case $(k,d)=1$. The proposition is obvious because $k$ is  invertible mod $d$.\par
Suppose $a=(k,d)\neq 1$. Define $k'=k/a,d'=d/a,b'=b/a$. Then the equation
\begin{equation}
k' x+b' \text{ mod } d'=0
\end{equation}
has exactly one solution, $x_0$, with $k' x_0+b'=c d'$, or $k x_0+b=c d$.\par
Consider any solution $y$ of the original equation, $k y+b=d $. This means that $k' (y-x_0)=(1-c) d'$, 
so that $y-x_0$ is divisible by $d'$. Thus, $y=x_0+ p d'$; since $y\in \mathbb{Z}/d \mathbb{Z}$, $p$ can take values from $1$ to $a$. Therefore there are $a=(d,k)$ solutions.\par
\end{proof}
In the end, we are able to collect all the terms: 
{\small
\begin{align}
    \braket{\Theta_T(0,b^\mu,\tau)}&=\gamma_{b}+\frac{1}{k^N} \sum_{(c,d)=1,d>0} d^{-N} \left|\frac{\tau}{k}-\frac{c}{d}\right|^{-N} \prod_{i=1}^N \# \{ n^i \in (\mathbb{Z}/d \mathbb{Z})^N \  | \   kn^i+b^i\text{ mod } d=0 \}\\
    &=\gamma_{b}+\frac{1}{k^N} \sum_{(c,d)=1,d>0} d^{-N} \left|\frac{\tau}{k}-\frac{c}{d}\right|^{-N}(k,d)^{N} \delta(kb^i\text{ mod } (k,d))\notag\\ 
    &=\gamma_{b}+\frac{1}{k^N} \sum_{(c,d)=1,d>0} d^{-N} \left|\frac{\tau}{k}-\frac{c}{d}\right|^{-N} \delta(ckb^i\text{ mod } (k,d))\notag
\end{align}
}
The last line is true because $b$ has entries whose numerators are less than or equal to $k$, and $(c,d)=1$. Now we write $k=(k,d)p,d=(k,d)q,(p,q)=1$. Furthermore we note that $(pc,q)=1$. Then
\begin{align}
    \braket{\Theta_T(0,b^\mu,\tau)}&=\gamma_b+ \sum_{(pc,q)=1,q>0} |q \tau-p c|^{-N}  \delta(pckb^i\text{ mod } k) \\
    &=\gamma_b+ \sum_{(c,d)=1,c>0} |c\tau- d|^{-N}  \delta (d k b^i=0\text{ mod } k)\notag\\
    &=\gamma_b+ \sum_{(c,d)=1,c>0} |c\tau- d|^{-N}  \delta (db^i\in \mathbb{Z}^N)\notag
\end{align}
in which in the second equality we have renamed all the variables. It is worth mentioning that the renaming of the variables do not cause any multiplicity troubles. That is, there is a bijection between pairs $(c,d)$ and pairs $(pc,q)$: the map from $(c,d)$ to $(pc,q)$ is the definition. The map from $(pc,q)$ to $(c,d) $ is by observing that $(k,pc)=p(k/p,c)=p((k,d),c)=p$, so $(k,d)=k/(k,pc)$ and thus $d=k/(k,pc)q$, $c=pc/(k,pc)$.\par
This completes the averaging process for $\Theta_{H,T}(0, b^\mu, \tau)$.\par
\subsection{The Averaging of partition functions $\Theta_{H,T}(a^\mu, b^\mu, \tau)$}
Now consider the case $a \neq 0$. We note the following identity of theta functions:
\begin{equation}
\Theta_{H,T}(a^\mu, b^\mu, \tau) 
= \sum_{r^\mu \in (\mathbb{Z}/k\mathbb{Z})^{2N}} e^{-2\pi i  \langle a, 2r + b\rangle}\Theta_{k^3H,k^3T}( 0, \frac{b^\mu + r^\mu}{k}, \frac{\tau}{k}) 
\end{equation}
Here we take the basis $a^\mu=(0,a^i),b^\mu=(0,b^i)$. We have thus reduced the averaging to the procedure in the previous section. Note that an $k^2$ rescaling as in \eqref{thetaresidues} would have been sufficient for the requirement of the averaging process because $2k^2 T(b+r)/k$ has integer entries; we used a $k^3$ scaling to simplify the following calculations.\par
We can thus calculate the average
{\small
\begin{align}
    \braket{\Theta_{T}(a^\mu, b^\mu, \tau)}=\gamma_b+\frac{1}{k^{3N}}& \sum_{(c,d)=1,d>0} d^{-2N} |\frac{\tau}{k}-\frac{c}{d}|^{-N} \\& \times\sum_{r \in (\mathbb{Z}/k\mathbb{Z})^{2N}}\sum_{g \in (\mathbb{Z}/d\mathbb{Z})^{2N}} \exp\left(2\pi i \frac{c}{d} k^3\braket{g+\frac{r}{k}+\frac{b}{k},g+\frac{r}{k}+\frac{b}{k}}-2\pi i  \langle a, 2r\rangle\right)\notag
\end{align}
}
Again, we focus on the sum of phases.
\begin{align}
    &\sum_{r \in (\mathbb{Z}/k\mathbb{Z})^{2N}}\sum_{g \in (\mathbb{Z}/d\mathbb{Z})^{2N}} \exp(2\pi i \frac{c}{d} k^3\braket{g+\frac{r}{k}+\frac{b}{k},g+\frac{r}{k}+\frac{b}{k}}-2\pi i  \langle a, 2r\rangle)\\
    =&\sum_{r \in (\mathbb{Z}/k\mathbb{Z})^{2N}}\sum_{g \in (\mathbb{Z}/d\mathbb{Z})^{2N}} \exp(2\pi i \frac{c}{d} k\braket{kg+r+b-\frac{da}{ck},kg+r+b-\frac{da}{ck}})\notag\\ 
    =&\sum_{m\in (\mathbb{Z}/dk\mathbb{Z})^{2N}}\exp(2\pi i \frac{c}{d} k\braket{m+b-\frac{da}{ck},m+b-\frac{da}{ck}})\notag\\
    =&\sum_{r \in (\mathbb{Z}/k\mathbb{Z})^{2N}}\sum_{g \in (\mathbb{Z}/d\mathbb{Z})^{2N}} \exp(2\pi i \frac{c}{d} k\braket{g+dr+b-\frac{da}{ck},g+dr+b-\frac{da}{ck}})\notag\\
    =&\sum_{r \in (\mathbb{Z}/k\mathbb{Z})^{2N}}\exp(-2\pi i d\braket{a,2r})\sum_{g \in (\mathbb{Z}/d\mathbb{Z})^{2N}} \exp(2\pi i \frac{c}{d} k\braket{g+b-\frac{da}{ck},g+b-\frac{da}{ck}})\notag
\end{align}
In the second line we used the fact that $\braket{a,a}=\braket{b,b}=\braket{a,b}=0$.
Hence, we can do the two sums separately,
\begin{align}
    &\sum_{r \in (\mathbb{Z}/k\mathbb{Z})^{2N}}\exp(-2\pi i d\braket{a,2r})\sum_{g \in (\mathbb{Z}/d\mathbb{Z})^{2N}} \exp(2\pi i \frac{c}{d} k\braket{g+b-\frac{da}{ck},g+b-\frac{da}{ck}})\\
    =&k^{2N}\delta(da^i\in\mathbb{Z}^N)\sum_{g \in (\mathbb{Z}/d\mathbb{Z})^{2N}} \exp(2\pi i\frac{c}{d} k\braket{g+b-\frac{da}{ck},g+b-\frac{da}{ck}})\notag\\ 
    =&k^{2N}\delta(da^i\in\mathbb{Z}^N)d^N (kc,d)^N\delta(ckb^i-da^i\text{ mod }(kc,d))\notag\\
    =&k^{2N}d^N (k,d)^N\delta(ckb^i-da^i\text{ mod }(k,d))\notag
\end{align}
in which we have omitted the exponential sum that's exactly the same as the last section. In the last equality the two delta functions merge: if $da^i\not\in \mathbb{Z}^N$, then the last delta function can never be satisfied.\par
We can thus plug this result back into the average:
\begin{equation}
    \braket{\Theta_{T}(a^\mu, b^\mu, \tau)}=\gamma_b+\frac{1}{k^{N}} \sum_{(c,d)=1,d>0} d^{-N}(k,d)^N \left|\frac{\tau}{k}-\frac{c}{d}\right|^{-N} \delta(ckb^i-da^i\text{ mod }(k,d))
\end{equation}
Similarly we write $k=(k,d)p,d=(k,d)q,(p,q)=1$. Thus
\begin{align}\label{avg}
    \braket{\Theta_{T}(a^\mu, b^\mu, \tau)}&=\gamma_b+ \sum_{(pc,q)=1,q>0} |q \tau-p c|^{-N}\delta(pckb^i-kqb^i\text{ mod }k)\\
    &=\gamma_b+ \sum_{(c,d)=1,c>0} |c\tau-d|^{-N}\delta(cka^i-dkb^i\text{ mod }k) \notag\\
    &=\gamma_b+ \sum_{(c,d)=1,c>0} |c\tau-d|^{-N}\delta(ca^i-db^i\in\mathbb{Z}^N)\notag
\end{align}
which is the final expression for the averaged classical partition function of the twisted boson, Eq. (\ref{specialAverage})\par

\subsection{Derivation from modular differential equation}

For a self-contained derivation of our general averaging formula \eqref{generalAverage}, we can also apply the method of Maloney and Witten \cite{2006.04855}. It is straightforward to check by a direct calculation that $\Theta_{H,T}$ in \eqref{fHS} obeys the differential equation 
\be\label{mweq}
(\Delta_\tau - N \frac{\p}{\p \tau_2} - \Delta_M) \Theta_{H,T}(A^\mu, \tau) = 0
\ee
where $\Delta_{M}$ is the Laplacian on moduli space acting on $H_{\mu\nu}$ (or $G_{ij}$ and $B_{ij}$), and $\Delta_{\tau}$ is the hyperbolic Laplacian on the upper-half plane,
$\Delta_{\tau} =  - \tau_2^2\left( \frac{\p^2}{\p \tau_1^2} + \frac{\p^2}{\p \tau_2^2} \right) $ .
Inserting this differential equation into the averaging integral \eqref{averageint} leads to the eigenvalue equation 
\be\label{evalue}
(\Delta_{\tau} + \frac{N}{2}(\frac{N}{2}-1)) \tau_2^{N/2} \big\langle \Theta_T(A^\mu, \tau) \big\rangle = 0 \ .
\ee
So far, this is exactly the argument of Maloney and Witten, who considered the case with trivial twists $A^\mu=0$ \cite{2006.04855}. All the twisted sectors obey the same differential equation, but the difference is that with nontrivial twists, $\Theta$ is no longer invariant under $SL(2,\mathbb{Z})$; it transforms in some finite-dimensional representation. In fact the decomposition \eqref{thetaresidues} and the results of \cite{Schoeneberg,Vigneras} on the $a=0$ theta functions tell us that $\Theta_{H,T}$ is a modular form for the congruence subgroup $\Gamma(k^2)$.

It is easy to check that the Eisenstein series \eqref{generalAverage} satisfies the eigenvalue equation. We can also check that \eqref{generalAverage} transforms in the correct representation, as follows. First we rewrite the average as 
\begin{align}
\big\langle \Theta_T(A^\mu, \tau) \big\rangle=  (\mbox{Im\ }\tau)^{-N/2} \sum_{\gamma \in \Gamma_{\infty} \backslash SL(2, \mathbb{Z})} (\mbox{Im\ } \gamma \tau)^{N/2} \Theta_{H,T}(\gamma A^\mu, i\infty)  \ .
\end{align}
Now act on this with $\sigma \in SL(2,\mathbb{Z})$ by replacing $A^\mu \to \sigma A^\mu, \tau \to \sigma \tau$. By relabeling $\gamma \to \gamma \sigma^{-1}$ in the resulting sum, we find
\begin{align}
\big\langle \Theta_T(\sigma A^\mu, \sigma \tau) \big\rangle &= (\mbox{Im\ } \sigma \tau)^{-N/2} (\mbox{Im\ }\tau)^{N/2} \big\langle \Theta_T(A^\mu,\tau) \big\rangle
\end{align}
which is indeed the correct transformation \eqref{modtheta}.

So our $\langle \Theta_T \rangle$ has all the right properties to be the average: it obeys the differential equation \eqref{evalue}, has the correct singular behavior, and transforms properly. In fact it is the unique object with these properties. The proof of uniqueness uses the fact that $\langle \Theta_T \rangle$ is a modular form for $\Gamma(k^2)$ and follows exactly the argument in \cite{2006.04855, 2104.14710} so we omit it. 

\section{Fourier transforms}\label{app:fourier}
In this appendix we calculate the Fourier transforms used in section \ref{ss:spectrum} and discuss some additional properties of the $\Gamma_0(k)$ Eisenstein series.

\subsection{Fourier transform of $f_0(\tau)$, $f(\tau)$, and $g(\tau)$}
As explained in the main text, the functions $f_0, f, g$ defined in \eqref{f0fg} are related to the principal Eisenstein series for $\Gamma_0(k)$. For non-prime $k$, the Fourier transform can be found using the methods in, e.g. \cite{goldfeld,MR1474964}, but in this section we assume $k$ is prime, and in this case things are considerably simpler. First let us review the situation for the Eisenstein series for $SL(2,\mathbb{Z})$. To calculate the Fourier transform of
\begin{equation}\label{mod1} 
 f_0(\tau)=  \big\langle \Theta_{\Lambda}(0, 0, \tau) \big\rangle=1+\sum_{(c,d)=1,c>0} |c \tau+ d|^{-N} \ ,
\end{equation}
the first step is to multiply by $\zeta(N)$, because this converts the sum over coprime $(c,d)$ into a sum over unrestricted integers, 
\begin{align}\label{zetaf0}
\zeta(N)   f_0(\tau)
&= \zeta(N) + \sum_{m,n \in \mathbb{Z}, m>0} |m\tau + n|^{-N} \ .
\end{align}
Now the Fourier integral is straightforward to calculate, with the result
\begin{align}
\zeta(N)    f_0(\tau) &=
\zeta(N)+2^{2-N}\pi  \tau_2^{1-N} \zeta(N-1) \frac{\Gamma(N-1)}{\Gamma(N/2)^2}\\&+\frac{2\pi^{N/2}}{\Gamma(N/2)}\sum_{n \neq 0}\sigma_{1-N}(n)  \tau_2^{1/2-N/2} |n|^{N/2-1/2} K_{N/2-1/2}(2\pi |n|  \tau_2) \exp(2\pi i n \tau_1) \notag
\end{align}
where $\sigma_{s}(n) = \sum_{d|n}d^s$ is the divisor function. 
To repeat this for an Eisenstein series defined on the congruence subgroups, we first multiply by a Dirichlet $L$-function. This is the analogue of $\zeta(N)$ for the ordinary Eisenstein series.  
 The Dirichlet $L$-function for a multiplicative character $\chi$ is defined as
\begin{equation}
    L(N,\chi)=\sum_{m\in \mathbb{Z},m>0}\frac{\chi(m)}{m^N} \ .
\end{equation}
For the principal character,
\be
L(N,\chi_0) = \zeta(N)(1-k^{-N}) \ .
\ee
Now consider
\begin{equation}
    \begin{aligned}
f(\tau) =  \big\langle \Theta_{\Lambda}(w(\alpha), 0, \tau) \big\rangle &=1+\sum_{(c,d)=1,c>0} |c \tau+ d|^{-N} \delta(c \alpha \in k Q)\\
&= (\mbox{Im\ }\tau)^{-N/2} E_k(\tau, \frac{N}{2}) \ .
\end{aligned}
\end{equation}
We first multiply by the $L$-function,
\begin{equation}
    \begin{aligned}
        L(N,\chi_0) f(\tau)&=L(N,\chi_0)+\sum_{m,n\in\mathbb{Z},m> 0} \chi_0(n) \frac{1}{|k m \tau+n|^N}
\end{aligned}
\end{equation}
The Fourier transform is now straightforward, for any $k$. For prime $k$, we can also write the sum as
\begin{align}
\sum_{m,n\in\mathbb{Z},m> 0} \chi_0(n) \frac{1}{|k m \tau+n|^N} &=  \sum_{m,n\in\mathbb{Z},m> 0} \frac{1}{|k m \tau+n|^N} - \frac{1}{k^N}  \sum_{m,n\in\mathbb{Z},m> 0} \frac{1}{|m \tau+n|^N} 
\end{align}
Comparing to \eqref{zetaf0}, we recognize this as the difference of two ordinary $SL(2,\mathbb{Z})$ Eisenstein series multiplied by $\zeta(N)$. Thus
\begin{align}\label{ff0rel}
f(\tau)&=\frac{1}{k^N-1} \left(k^N f_0(k \tau) -  f_0(\tau)\right) \ .
\end{align}
The Fourier transform is therefore
\begin{align}
 L(N,\chi_0) f(\tau) =& L(N,\chi_0)+\frac{2^{2-N}\pi  \Gamma(N-1) (k-1)}{k^N \Gamma(N/2)^2} \tau_2^{1-N}\zeta(N-1)\\&+\frac{2\pi^{N/2}\tau_2^{1/2-N/2}}{k^N \Gamma(N/2)} 
        \Big(-\sum_{n \neq 0} \sigma_{1-N}(n) |n|^{N/2-1/2} K_{N/2-1/2}(2\pi |n|  \tau_2) e^{2\pi i n \tau_1} \notag\\
        &\qquad+k^{N/2+1/2} \sum_{n \neq 0} \sigma_{1-N}(n) |n|^{N/2-1/2} K_{N/2-1/2}(2\pi |n| k \tau_2) e^{2\pi i k n \tau_1} \Big)\notag
\end{align}
Now  consider the other twist,
\begin{align}
g(\tau) =  \big\langle \Theta_{\Lambda}(0, w(\beta), \tau) \big\rangle  \ .
\end{align}
By a modular transformation, 
\begin{align}
g(\tau) &= |\tau|^{-N} f(-\frac{1}{\tau})
\end{align}
Using \eqref{ff0rel} and $f_0(-\frac{1}{\tau}) = |\tau|^N f_0(\tau)$ we can express this in terms of the $SL(2,\mathbb{Z})$ Eisenstein series as
\begin{align}
g(\tau) &= \frac{1}{k^N-1}\left( f_0(\frac{\tau}{k}) - f_0(\tau) \right)
\end{align}
Define the twisted divisor function
\begin{equation}
    \sigma_{s}(n,\chi_0)=\sum_{m|n,m>0} m^{s}\chi_0(m) \ .
\end{equation}
The Fourier transform is
\begin{equation}
    g(\tau)=\sum_{l=-\infty}^{\infty} g_l(\tau_2) e^{2\pi i l \frac{\tau_1}{k}}
\end{equation}
where
\begin{equation}\label{goodg0}
    g_0(\tau_2)=\frac{ 2^{2-N}\pi \Gamma(N-1)\tau_2^{1-N}}{k \Gamma(N/2)^2}\frac{L(N-1,\chi_0)}{L(N,\chi_0)}
\end{equation}
and
\begin{equation}\label{goodgl}
    g_l(\tau_2)=\frac{2\pi^{N/2} \tau_2^{1/2-N/2}}{k^{N/2+1/2} \Gamma(N/2)L(N,\chi_0)}|l|^{N/2-1/2} \sigma_{1-N}(l,\chi_0) K_{N/2-1/2}\left(2\pi \frac{|l|\tau_2}{k}\right)
\end{equation}

\subsection{WZW coefficients $h_1$ and $h_2$}

Now we are ready to calculate the WZW coefficients $h_1$ and $h_2$ defined in \eqref{h1}-\eqref{h2},
\begin{equation}
    h_1(\tau)=f_0( k \tau), \qquad h_2(\tau)=\frac{1}{k^N-1} \big( f_0(\tau)-f_0(k \tau) \big) \ .
\end{equation}
%
%
The Fourier expansions are
\begin{align}\label{goodh1}
        h_1(\tau) 
        &=1+\frac{ 2^{2-N} \pi\tau_2^{1-N}}{k^{N-1}} \frac{\zeta(N-1)}{\zeta(N)}\frac{\Gamma(N-1)}{\Gamma(N/2)^2} \\&+\frac{2\pi^{N/2}\tau_2^{1/2-N/2}}{k^N \Gamma(N/2)\zeta(N)} k^{N/2+1/2} \sum_{n \neq 0} \sigma_{1-N}(n) |n|^{N/2-1/2} K_{N/2-1/2}(2\pi |n| k \tau_2) e^{2\pi i k n \tau_1}\notag 
\end{align}
and
\begin{align}\label{goodh2}
        h_2(\tau) 
        =&\frac{ 2^{2-N}\pi \Gamma(N-1)\tau_2^{1-N}}{k^N\Gamma(N/2)^2} \frac{\zeta(N-1)}{\zeta(N)}\left(1-\frac{k-1}{k^N-1}\right)\\
        &+\frac{2\pi^{N/2}\tau_2^{1/2-N/2}}{(k^N-1) \Gamma(N/2)\zeta(N)}\left(\sum_{n \neq 0,k\nmid n} \sigma_{1-N}(n) |n|^{N/2-1/2} K_{N/2-1/2}(2\pi |n| k \tau_2) e^{2\pi i k n \tau_1}\right. \notag \\
        &\left.+\sum_{n \neq 0} \left(\sigma_{N-1}(k n)-\sigma_{N-1}(n)\right) |kn|^{1/2-N/2}K_{N/2-1/2}(2\pi |n| k \tau_2) e^{2\pi i k n \tau_1}\right)\notag
\end{align}
We have rearranged terms in this expression so that the Fourier coefficients are manifestly positive.

\renewcommand{\baselinestretch}{1}\small
\bibliographystyle{ourbst}
\bibliography{merged_draft}

\providecommand{\href}[2]{#2}\begingroup\raggedright\begin{thebibliography}{10}

\bibitem{Coleman:1988cy}
S.~R. Coleman, {{Black Holes as Red Herrings: Topological Fluctuations and the
  Loss of Quantum Coherence}},
  \href{http://dx.doi.org/10.1016/0550-3213(88)90110-1}{Nucl. Phys. B {\bf
  307}, 867--882, 1988}.

\bibitem{Giddings:1988cx}
S.~B. Giddings and A.~Strominger, {{Loss of Incoherence and Determination of
  Coupling Constants in Quantum Gravity}},
  \href{http://dx.doi.org/10.1016/0550-3213(88)90109-5}{Nucl. Phys. B {\bf
  307}, 854--866, 1988}.

\bibitem{1806.06840}
P.~Saad, S.~H. Shenker and D.~Stanford, {{A semiclassical ramp in SYK and in
  gravity}},  2018,
  [\href{http://arxiv.org/abs/arXiv:1806.06840}{{arXiv:1806.06840 [hep-th]}}].

\bibitem{1903.11115}
P.~Saad, S.~H. Shenker and D.~Stanford, {{JT gravity as a matrix integral}},
  2019, [\href{http://arxiv.org/abs/arXiv:1903.11115}{{arXiv:1903.11115
  [hep-th]}}].

\bibitem{1910.10311}
P.~Saad, {{Late Time Correlation Functions, Baby Universes, and ETH in JT
  Gravity}},  2019,
  [\href{http://arxiv.org/abs/arXiv:1910.10311}{{arXiv:1910.10311 [hep-th]}}].

\bibitem{Marolf:2020xie}
D.~Marolf and H.~Maxfield, {{Transcending the ensemble: baby universes,
  spacetime wormholes, and the order and disorder of black hole information}},
  \href{http://dx.doi.org/10.1007/JHEP08(2020)044}{JHEP {\bf 08}, 044, 2020},
  [\href{http://arxiv.org/abs/arXiv:2002.08950}{{arXiv:2002.08950 [hep-th]}}].

\bibitem{2006.08648}
J.~Cotler and K.~Jensen, {{AdS$_{3}$ gravity and random CFT}},
  \href{http://dx.doi.org/10.1007/JHEP04(2021)033}{JHEP {\bf 04}, 033, 2021},
  [\href{http://arxiv.org/abs/arXiv:2006.08648}{{arXiv:2006.08648 [hep-th]}}].

\bibitem{2008.07533}
L.~Eberhardt, {{Partition functions of the tensionless string}},
  \href{http://dx.doi.org/10.1007/JHEP03(2021)176}{JHEP {\bf 03}, 176, 2021},
  [\href{http://arxiv.org/abs/arXiv:2008.07533}{{arXiv:2008.07533 [hep-th]}}].

\bibitem{2102.12355}
L.~Eberhardt, {{Summing over Geometries in String Theory}},  2021,
  [\href{http://arxiv.org/abs/arXiv:2102.12355}{{arXiv:2102.12355 [hep-th]}}].

\bibitem{2004.06738}
J.~McNamara and C.~Vafa, {{Baby Universes, Holography, and the Swampland}},
  2020, [\href{http://arxiv.org/abs/arXiv:2004.06738}{{arXiv:2004.06738
  [hep-th]}}].

\bibitem{hep-th/0401024}
J.~M. Maldacena and L.~Maoz, {{Wormholes in AdS}},
  \href{http://dx.doi.org/10.1088/1126-6708/2004/02/053}{JHEP {\bf 02}, 053,
  2004},
  [\href{http://arxiv.org/abs/arXiv:hep-th/0401024}{{arXiv:hep-th/0401024}}].

\bibitem{0705.2768}
N.~Arkani-Hamed, J.~Orgera and J.~Polchinski, {{Euclidean wormholes in string
  theory}}, \href{http://dx.doi.org/10.1088/1126-6708/2007/12/018}{JHEP {\bf
  12}, 018, 2007},
  [\href{http://arxiv.org/abs/arXiv:0705.2768}{{arXiv:0705.2768 [hep-th]}}].

\bibitem{2103.16754}
P.~Saad, S.~H. Shenker, D.~Stanford and S.~Yao, {{Wormholes without
  averaging}},  2021,
  [\href{http://arxiv.org/abs/arXiv:2103.16754}{{arXiv:2103.16754 [hep-th]}}].

\bibitem{2105.08207}
B.~Mukhametzhanov, {{Half-wormhole in SYK with one time point}},  2021,
  [\href{http://arxiv.org/abs/arXiv:2105.08207}{{arXiv:2105.08207 [hep-th]}}].

\bibitem{1911.12333}
A.~Almheiri, T.~Hartman, J.~Maldacena, E.~Shaghoulian and A.~Tajdini, {{Replica
  Wormholes and the Entropy of Hawking Radiation}},
  \href{http://dx.doi.org/10.1007/JHEP05(2020)013}{JHEP {\bf 05}, 013, 2020},
  [\href{http://arxiv.org/abs/arXiv:1911.12333}{{arXiv:1911.12333 [hep-th]}}].

\bibitem{1911.11977}
G.~Penington, S.~H. Shenker, D.~Stanford and Z.~Yang, {{Replica wormholes and
  the black hole interior}},  2019,
  [\href{http://arxiv.org/abs/arXiv:1911.11977}{{arXiv:1911.11977 [hep-th]}}].

\bibitem{1912.07654}
A.~Ghosh, H.~Maxfield and G.~J. Turiaci, {{A universal Schwarzian sector in
  two-dimensional conformal field theories}},
  \href{http://dx.doi.org/10.1007/JHEP05(2020)104}{JHEP {\bf 05}, 104, 2020},
  [\href{http://arxiv.org/abs/arXiv:1912.07654}{{arXiv:1912.07654 [hep-th]}}].

\bibitem{2006.11317}
H.~Maxfield and G.~J. Turiaci, {{The path integral of 3D gravity near
  extremality; or, JT gravity with defects as a matrix integral}},
  \href{http://dx.doi.org/10.1007/JHEP01(2021)118}{JHEP {\bf 01}, 118, 2021},
  [\href{http://arxiv.org/abs/arXiv:2006.11317}{{arXiv:2006.11317 [hep-th]}}].

\bibitem{2007.15653}
J.~Cotler and K.~Jensen, {{AdS$_3$ wormholes from a modular bootstrap}},
  \href{http://dx.doi.org/10.1007/JHEP11(2020)058}{JHEP {\bf 11}, 058, 2020},
  [\href{http://arxiv.org/abs/arXiv:2007.15653}{{arXiv:2007.15653 [hep-th]}}].

\bibitem{2006.04839}
N.~Afkhami-Jeddi, H.~Cohn, T.~Hartman and A.~Tajdini, {{Free partition
  functions and an averaged holographic duality}},
  \href{http://dx.doi.org/10.1007/JHEP01(2021)130}{JHEP {\bf 01}, 130, 2021},
  [\href{http://arxiv.org/abs/arXiv:2006.04839}{{arXiv:2006.04839 [hep-th]}}].

\bibitem{2006.04855}
A.~Maloney and E.~Witten, {{Averaging over Narain moduli space}},
  \href{http://dx.doi.org/10.1007/JHEP10(2020)187}{JHEP {\bf 10}, 187, 2020},
  [\href{http://arxiv.org/abs/arXiv:2006.04855}{{arXiv:2006.04855 [hep-th]}}].

\bibitem{1803.04423}
M.~R. Gaberdiel and R.~Gopakumar, {{Tensionless string spectra on AdS$_{3}$}},
  \href{http://dx.doi.org/10.1007/JHEP05(2018)085}{JHEP {\bf 05}, 085, 2018},
  [\href{http://arxiv.org/abs/arXiv:1803.04423}{{arXiv:1803.04423 [hep-th]}}].

\bibitem{1911.00378}
L.~Eberhardt, M.~R. Gaberdiel and R.~Gopakumar, {{Deriving the
  AdS$_{3}$/CFT$_{2}$ correspondence}},
  \href{http://dx.doi.org/10.1007/JHEP02(2020)136}{JHEP {\bf 02}, 136, 2020},
  [\href{http://arxiv.org/abs/arXiv:1911.00378}{{arXiv:1911.00378 [hep-th]}}].

\bibitem{cond-mat/9212030}
S.~Sachdev and J.~Ye, {{Gapless spin fluid ground state in a random, quantum
  Heisenberg magnet}},
  \href{http://dx.doi.org/10.1103/PhysRevLett.70.3339}{Phys. Rev. Lett. {\bf
  70}, 3339, 1993},
  [\href{http://arxiv.org/abs/arXiv:cond-mat/9212030}{{arXiv:cond-mat/9212030}}].

\bibitem{Kitaev1}
``{Alexei Kitaev}, a simple model of quantum holography.''
  \url{https://online.kitp.ucsb.edu/online/entangled15/kitaev/}.

\bibitem{2102.12509}
S.~Datta, S.~Duary, P.~Kraus, P.~Maity and A.~Maloney, {{Adding Flavor to the
  Narain Ensemble}},  2021,
  [\href{http://arxiv.org/abs/arXiv:2102.12509}{{arXiv:2102.12509 [hep-th]}}].

\bibitem{2103.15826}
N.~Benjamin, C.~A. Keller, H.~Ooguri and I.~G. Zadeh, {{Narain to Narnia}},
  2021, [\href{http://arxiv.org/abs/arXiv:2103.15826}{{arXiv:2103.15826
  [hep-th]}}].

\bibitem{2104.14710}
M.~Ashwinkumar, M.~Dodelson, A.~Kidambi, J.~M. Leedom and M.~Yamazaki,
  {{Chern-Simons Invariants from Ensemble Averages}},  2021,
  [\href{http://arxiv.org/abs/arXiv:2104.14710}{{arXiv:2104.14710 [hep-th]}}].

\bibitem{2006.08216}
A.~P\'erez and R.~Troncoso, {{Gravitational dual of averaged free
  CFT\textquoteright{}s over the Narain lattice}},
  \href{http://dx.doi.org/10.1007/JHEP11(2020)015}{JHEP {\bf 11}, 015, 2020},
  [\href{http://arxiv.org/abs/arXiv:2006.08216}{{arXiv:2006.08216 [hep-th]}}].

\bibitem{2009.01244}
A.~Dymarsky and A.~Shapere, {{Quantum stabilizer codes, lattices, and CFTs}},
  \href{http://dx.doi.org/10.1007/JHEP03(2021)160}{JHEP {\bf 03}, 160, 2021},
  [\href{http://arxiv.org/abs/arXiv:2009.01244}{{arXiv:2009.01244 [hep-th]}}].

\bibitem{2012.15830}
A.~Dymarsky and A.~Shapere, {{Comments on the holographic description of Narain
  theories}},  2020,
  [\href{http://arxiv.org/abs/arXiv:2012.15830}{{arXiv:2012.15830 [hep-th]}}].

\bibitem{2102.03136}
V.~Meruliya, S.~Mukhi and P.~Singh, {{Poincar\'e Series, 3d Gravity and
  Averages of Rational CFT}},
  \href{http://dx.doi.org/10.1007/JHEP04(2021)267}{JHEP {\bf 04}, 267, 2021},
  [\href{http://arxiv.org/abs/arXiv:2102.03136}{{arXiv:2102.03136 [hep-th]}}].

\bibitem{2104.10178}
V.~Meruliya and S.~Mukhi, {{AdS$_3$ Gravity and RCFT Ensembles with Multiple
  Invariants}},  2021,
  [\href{http://arxiv.org/abs/arXiv:2104.10178}{{arXiv:2104.10178 [hep-th]}}].

\bibitem{hep-th/0304234}
S.~Forste and D.~Roggenkamp, {{Current current deformations of conformal field
  theories, and WZW models}},
  \href{http://dx.doi.org/10.1088/1126-6708/2003/05/071}{JHEP {\bf 05}, 071,
  2003},
  [\href{http://arxiv.org/abs/arXiv:hep-th/0304234}{{arXiv:hep-th/0304234}}].

\bibitem{Gepner:1987sm}
D.~Gepner, {{New Conformal Field Theories Associated with Lie Algebras and
  their Partition Functions}},
  \href{http://dx.doi.org/10.1016/0550-3213(87)90176-3}{Nucl. Phys. B {\bf
  290}, 10--24, 1987}.

\bibitem{Gepner:1986hr}
D.~Gepner and Z.-A. Qiu, {{Modular Invariant Partition Functions for
  Parafermionic Field Theories}},
  \href{http://dx.doi.org/10.1016/0550-3213(87)90348-8}{Nucl. Phys. B {\bf
  285}, 423, 1987}.

\bibitem{1011.2986}
M.~R. Gaberdiel and R.~Gopakumar, {{An AdS$_{3}$ Dual for Minimal Model CFTs}},
  \href{http://dx.doi.org/10.1103/PhysRevD.83.066007}{Phys. Rev. D {\bf 83},
  066007, 2011}, [\href{http://arxiv.org/abs/arXiv:1011.2986}{{arXiv:1011.2986
  [hep-th]}}].

\bibitem{1011.5900}
E.~Kiritsis and V.~Niarchos, {{Large-N limits of 2d CFTs, Quivers and AdS$_{3}$
  duals}}, \href{http://dx.doi.org/10.1007/JHEP04(2011)113}{JHEP {\bf 04}, 113,
  2011}, [\href{http://arxiv.org/abs/arXiv:1011.5900}{{arXiv:1011.5900
  [hep-th]}}].

\bibitem{hep-th/9302033}
E.~Kiritsis, {{Exact duality symmetries in CFT and string theory}},
  \href{http://dx.doi.org/10.1016/0550-3213(93)90428-R}{Nucl. Phys. B {\bf
  405}, 109--142, 1993},
  [\href{http://arxiv.org/abs/arXiv:hep-th/9302033}{{arXiv:hep-th/9302033}}].

\bibitem{2010.10539}
D.~Harlow and E.~Shaghoulian, {{Global symmetry, Euclidean gravity, and the
  black hole information problem}},
  \href{http://dx.doi.org/10.1007/JHEP04(2021)175}{JHEP {\bf 04}, 175, 2021},
  [\href{http://arxiv.org/abs/arXiv:2010.10539}{{arXiv:2010.10539 [hep-th]}}].

\bibitem{2011.06005}
Y.~Chen and H.~W. Lin, {{Signatures of global symmetry violation in relative
  entropies and replica wormholes}},  2020,
  [\href{http://arxiv.org/abs/arXiv:2011.06005}{{arXiv:2011.06005 [hep-th]}}].

\bibitem{2011.09444}
P.-S. Hsin, L.~V. Iliesiu and Z.~Yang, {{A violation of global symmetries from
  replica wormholes and the fate of black hole remnants}},  2020,
  [\href{http://arxiv.org/abs/arXiv:2011.09444}{{arXiv:2011.09444 [hep-th]}}].

\bibitem{2012.07875}
A.~Belin, J.~De~Boer, P.~Nayak and J.~Sonner, {{Charged Eigenstate
  Thermalization, Euclidean Wormholes and Global Symmetries in Quantum
  Gravity}},  2020,
  [\href{http://arxiv.org/abs/arXiv:2012.07875}{{arXiv:2012.07875 [hep-th]}}].

\bibitem{Misner:1957mt}
C.~W. Misner and J.~A. Wheeler, {{Classical physics as geometry: Gravitation,
  electromagnetism, unquantized charge, and mass as properties of curved empty
  space}}, \href{http://dx.doi.org/10.1016/0003-4916(57)90049-0}{Annals Phys.
  {\bf 2}, 525--603, 1957}.

\bibitem{hep-th/0304042}
J.~Polchinski, {{Monopoles, duality, and string theory}},
  \href{http://dx.doi.org/10.1142/S0217751X0401866X}{Int. J. Mod. Phys. A {\bf
  19S1}, 145--156, 2004},
  [\href{http://arxiv.org/abs/arXiv:hep-th/0304042}{{arXiv:hep-th/0304042}}].

\bibitem{1011.5120}
T.~Banks and N.~Seiberg, {{Symmetries and Strings in Field Theory and
  Gravity}}, \href{http://dx.doi.org/10.1103/PhysRevD.83.084019}{Phys. Rev. D
  {\bf 83}, 084019, 2011},
  [\href{http://arxiv.org/abs/arXiv:1011.5120}{{arXiv:1011.5120 [hep-th]}}].

\bibitem{1810.05337}
D.~Harlow and H.~Ooguri, {{Constraints on Symmetries from Holography}},
  \href{http://dx.doi.org/10.1103/PhysRevLett.122.191601}{Phys. Rev. Lett. {\bf
  122}, 191601, 2019},
  [\href{http://arxiv.org/abs/arXiv:1810.05337}{{arXiv:1810.05337 [hep-th]}}].

\bibitem{Wess:1971yu}
J.~Wess and B.~Zumino, {{Consequences of anomalous Ward identities}},
  \href{http://dx.doi.org/10.1016/0370-2693(71)90582-X}{Phys. Lett. B {\bf 37},
  95--97, 1971}.

\bibitem{Witten:1983tw}
E.~Witten, {{Global Aspects of Current Algebra}},
  \href{http://dx.doi.org/10.1016/0550-3213(83)90063-9}{Nucl. Phys. B {\bf
  223}, 422--432, 1983}.

\bibitem{Witten:1983ar}
E.~Witten, {{Nonabelian Bosonization in Two-Dimensions}},
  \href{http://dx.doi.org/10.1007/BF01215276}{Commun. Math. Phys. {\bf 92},
  455--472, 1984}.

\bibitem{Novikov:1982ei}
S.~P. Novikov, {{The Hamiltonian formalism and a many valued analog of Morse
  theory}}, \href{http://dx.doi.org/10.1070/RM1982v037n05ABEH004020}{Usp. Mat.
  Nauk {\bf 37N5}, 3--49, 1982}.

\bibitem{DiFrancesco:1997nk}
P.~Di~Francesco, P.~Mathieu and D.~Senechal, \emph{{Conformal Field Theory}}.
\newblock Graduate Texts in Contemporary Physics. Springer-Verlag, New York,
  1997,
  \href{http://dx.doi.org/10.1007/978-1-4612-2256-9}{10.1007/978-1-4612-2256-9}.

\bibitem{Kac:1990gs}
V.~G. Kac, \emph{{Infinite dimensional Lie algebras}}.
\newblock 1990.

\bibitem{Yang:1988bi}
S.-K. Yang, {{Marginal Deformation of Minimal $N=2$ Superconformal Field
  Theories and the Witten Index}},
  \href{http://dx.doi.org/10.1016/0370-2693(88)90940-9}{Phys. Lett. B {\bf
  209}, 242--246, 1988}.

\bibitem{Chaudhuri:1988qb}
S.~Chaudhuri and J.~A. Schwartz, {{A Criterion for Integrably Marginal
  Operators}}, \href{http://dx.doi.org/10.1016/0370-2693(89)90393-6}{Phys.
  Lett. B {\bf 219}, 291--296, 1989}.

\bibitem{Siegel1}
C.~L. Siegel, {{Indefinite quadratische Formen und Funktionentheorie I}},
  \href{http://dx.doi.org/https://doi.org/10.1007/BF01343549}{Math. Ann. {\bf
  124}, 17--54, 1951}.

\bibitem{Siegel2}
C.~L. Siegel, {{Indefinite quadratische Formen und Funktionentheorie II}},
  \href{http://dx.doi.org/https://doi.org/10.1007/BF01343576}{Math. Ann. {\bf
  124}, 364–387, 1951}.

\bibitem{Siegel3}
C.~L. Siegel, {On the theory of indefinite quadratic forms}, {Annals of
  Mathematics {\bf 45}, 577--622, 1944}.

\bibitem{siegelbook}
C.~L. Siegel, \emph{Lectures on Quadratic Forms}.
\newblock Tata Institute of Fundamental Research, Bombay, 1957.

\bibitem{0712.0155}
A.~Maloney and E.~Witten, {{Quantum Gravity Partition Functions in Three
  Dimensions}}, \href{http://dx.doi.org/10.1007/JHEP02(2010)029}{JHEP {\bf 02},
  029, 2010}, [\href{http://arxiv.org/abs/arXiv:0712.0155}{{arXiv:0712.0155
  [hep-th]}}].

\bibitem{1906.04184}
N.~Benjamin, H.~Ooguri, S.-H. Shao and Y.~Wang, {{Light-cone modular bootstrap
  and pure gravity}},
  \href{http://dx.doi.org/10.1103/PhysRevD.100.066029}{Phys. Rev. D {\bf 100},
  066029, 2019},
  [\href{http://arxiv.org/abs/arXiv:1906.04184}{{arXiv:1906.04184 [hep-th]}}].

\bibitem{hep-th/9804085}
J.~M. Maldacena and A.~Strominger, {{AdS(3) black holes and a stringy exclusion
  principle}}, \href{http://dx.doi.org/10.1088/1126-6708/1998/12/005}{JHEP {\bf
  12}, 005, 1998},
  [\href{http://arxiv.org/abs/arXiv:hep-th/9804085}{{arXiv:hep-th/9804085}}].

\bibitem{1111.1987}
A.~Castro, M.~R. Gaberdiel, T.~Hartman, A.~Maloney and R.~Volpato, {{The
  Gravity Dual of the Ising Model}},
  \href{http://dx.doi.org/10.1103/PhysRevD.85.024032}{Phys. Rev. D {\bf 85},
  024032, 2012}, [\href{http://arxiv.org/abs/arXiv:1111.1987}{{arXiv:1111.1987
  [hep-th]}}].

\bibitem{1907.06656}
C.-M. Jian, A.~W.~W. Ludwig, Z.-X. Luo, H.-Y. Sun and Z.~Wang, {{Establishing
  strongly-coupled 3D AdS quantum gravity with Ising dual using all-genus
  partition functions}}, \href{http://dx.doi.org/10.1007/JHEP10(2020)129}{JHEP
  {\bf 10}, 129, 2020},
  [\href{http://arxiv.org/abs/arXiv:1907.06656}{{arXiv:1907.06656 [hep-th]}}].

\bibitem{hep-th/0005003}
R.~Dijkgraaf, J.~M. Maldacena, G.~W. Moore and E.~P. Verlinde, {{A Black hole
  Farey tail}},  2000,
  [\href{http://arxiv.org/abs/arXiv:hep-th/0005003}{{arXiv:hep-th/0005003}}].

\bibitem{1111.1161}
A.~Dabholkar, J.~Gomes and S.~Murthy, {{Localization \textbackslash{}\& Exact
  Holography}}, \href{http://dx.doi.org/10.1007/JHEP04(2013)062}{JHEP {\bf 04},
  062, 2013}, [\href{http://arxiv.org/abs/arXiv:1111.1161}{{arXiv:1111.1161
  [hep-th]}}].

\bibitem{1903.05100}
M.~Porrati and C.~Yu, {{Kac-Moody and Virasoro Characters from the Perturbative
  Chern-Simons Path Integral}},
  \href{http://dx.doi.org/10.1007/JHEP05(2019)083}{JHEP {\bf 05}, 083, 2019},
  [\href{http://arxiv.org/abs/arXiv:1903.05100}{{arXiv:1903.05100 [hep-th]}}].

\bibitem{Witten:1988hf}
E.~Witten, {{Quantum Field Theory and the Jones Polynomial}},
  \href{http://dx.doi.org/10.1007/BF01217730}{Commun. Math. Phys. {\bf 121},
  351--399, 1989}.

\bibitem{Witten:1991mm}
E.~Witten, {{On Holomorphic factorization of WZW and coset models}},
  \href{http://dx.doi.org/10.1007/BF02099196}{Commun. Math. Phys. {\bf 144},
  189--212, 1992}.

\bibitem{Isidro:1991fp}
J.~M. Isidro, J.~M.~F. Labastida and A.~V. Ramallo, {{Coset constructions in
  Chern-Simons gauge theory}},
  \href{http://dx.doi.org/10.1016/0370-2693(92)90480-R}{Phys. Lett. B {\bf
  282}, 63--72, 1992},
  [\href{http://arxiv.org/abs/arXiv:hep-th/9201027}{{arXiv:hep-th/9201027}}].

\bibitem{Gawedzki:2001ye}
K.~Gawedzki, {{Boundary WZW, G / H, G / G and CS theories}},
  \href{http://dx.doi.org/10.1007/s00023-002-8639-0}{Annales Henri Poincare
  {\bf 3}, 847--881, 2002},
  [\href{http://arxiv.org/abs/arXiv:hep-th/0108044}{{arXiv:hep-th/0108044}}].

\bibitem{hep-th/0403225}
S.~Gukov, E.~Martinec, G.~W. Moore and A.~Strominger, {{Chern-Simons gauge
  theory and the AdS(3) / CFT(2) correspondence}},  in \emph{{From Fields to
  Strings: Circumnavigating Theoretical Physics: A Conference in Tribute to Ian
  Kogan}}, 2004.
\newblock
  [\href{http://arxiv.org/abs/arXiv:hep-th/0403225}{{arXiv:hep-th/0403225}}].

\bibitem{1905.01319}
T.~Hartman, D.~Maz\'a\v{c} and L.~Rastelli, {{Sphere Packing and Quantum
  Gravity}}, \href{http://dx.doi.org/10.1007/JHEP12(2019)048}{JHEP {\bf 12},
  048, 2019}, [\href{http://arxiv.org/abs/arXiv:1905.01319}{{arXiv:1905.01319
  [hep-th]}}].

\bibitem{2006.02560}
N.~Afkhami-Jeddi, H.~Cohn, T.~Hartman, D.~de~Laat and A.~Tajdini,
  {{High-dimensional sphere packing and the modular bootstrap}},
  \href{http://dx.doi.org/10.1007/JHEP12(2020)066}{JHEP {\bf 12}, 066, 2020},
  [\href{http://arxiv.org/abs/arXiv:2006.02560}{{arXiv:2006.02560 [hep-th]}}].

\bibitem{Witten:1991yr}
E.~Witten, {{On string theory and black holes}},
  \href{http://dx.doi.org/10.1103/PhysRevD.44.314}{Phys. Rev. D {\bf 44},
  314--324, 1991}.

\bibitem{hep-th/0001053}
J.~M. Maldacena and H.~Ooguri, {{Strings in AdS(3) and SL(2,R) WZW model 1.:
  The Spectrum}}, \href{http://dx.doi.org/10.1063/1.1377273}{J. Math. Phys.
  {\bf 42}, 2929--2960, 2001},
  [\href{http://arxiv.org/abs/arXiv:hep-th/0001053}{{arXiv:hep-th/0001053}}].

\bibitem{Israel:2003ry}
D.~Israel, C.~Kounnas and M.~P. Petropoulos, {{Superstrings on NS5 backgrounds,
  deformed AdS(3) and holography}},
  \href{http://dx.doi.org/10.1088/1126-6708/2003/10/028}{JHEP {\bf 10}, 028,
  2003},
  [\href{http://arxiv.org/abs/arXiv:hep-th/0306053}{{arXiv:hep-th/0306053}}].

\bibitem{Orlando:2005vt}
D.~Orlando, {{AdS(2) x S(2) as an exact heterotic string background}},  in
  \emph{{NATO Advanced Study Institute and EC Summer School on String Theory:
  From Gauge Interactions to Cosmology}}, 2005.
\newblock
  [\href{http://arxiv.org/abs/arXiv:hep-th/0502213}{{arXiv:hep-th/0502213}}].

\bibitem{Detournay:2010rh}
S.~Detournay, D.~Israel, J.~M. Lapan and M.~Romo, {{String Theory on Warped
  $AdS_{3}$ and Virasoro Resonances}},
  \href{http://dx.doi.org/10.1007/JHEP01(2011)030}{JHEP {\bf 01}, 030, 2011},
  [\href{http://arxiv.org/abs/arXiv:1007.2781}{{arXiv:1007.2781 [hep-th]}}].

\bibitem{Giribet:2021cyy}
G.~Giribet, J.~Oliva and R.~Stuardo, {{Comments on single trace $T\bar{T}$ and
  other current-current deformations}},
  \href{http://dx.doi.org/10.1103/PhysRevD.103.126010}{Phys. Rev. D {\bf 103},
  126010, 2021},
  [\href{http://arxiv.org/abs/arXiv:2101.08867}{{arXiv:2101.08867 [hep-th]}}].

\bibitem{Ginsparg:1986bx}
P.~H. Ginsparg, {{Comment on Toroidal Compactification of Heterotic
  Superstrings}}, \href{http://dx.doi.org/10.1103/PhysRevD.35.648}{Phys. Rev. D
  {\bf 35}, 648, 1987}.

\bibitem{Schoeneberg}
B.~Schoeneberg, {{Das Verhalten von mehrfachen Thetareihen bei
  Modulsubstitutionen}}, \href{http://dx.doi.org/10.1007/BF01597371}{Math. Ann.
  {\bf 116}, 511--523, 1939}.

\bibitem{Vigneras}
M.-F. Vign{\'e}ras, {S{\'e}ries th{\^e}ta des formes quadratiques
  ind{\'e}finies},  in \emph{Modular Functions of One Variable VI} (J.-P. Serre
  and D.~B. Zagier, eds.), (Berlin, Heidelberg), pp.~227--239, Springer Berlin
  Heidelberg, 1977.

\bibitem{goldfeld}
D.~Goldfeld, {{On Convolutions of Non-Holomorphic Eisenstein Series}},
  \href{http://dx.doi.org/10.1016/0001-8708(81)90002-5}{Advances in Mathematics
  {\bf 39}, 240--256, 1981}.

\bibitem{MR1474964}
H.~Iwaniec, \emph{Topics in classical automorphic forms}, vol.~17 of
  \emph{Graduate Studies in Mathematics}.
\newblock American Mathematical Society, Providence, RI, 1997,
  \href{http://dx.doi.org/10.1090/gsm/017}{10.1090/gsm/017}.

\end{thebibliography}\endgroup
\end{document}